\newif\iftacversion
\newif\ifincludenotation
\newif\ifincludealggeoremark
\newif\ifremarktaskspace
\newif\iftransportfig
\newif\ifincludeappendices
\newif\ifincludechristoffel
\newif\ifincludeconstruction
\newif\ifincludeexapndedBD
\newif\ifincludethreeform
\newif\ifshortproof
\renewcommand{\vec}[1]{ {\mathbf{#1}}}
\newcommand{\T}{^\top}
\newcommand \se     {\mathrm{se}}
\newcommand{\bzero}{\mbox{\boldmath $0$}}
\newcommand{\vI}{{\mbox{\boldmath$\mathit{I}$}}}
\newcommand{\BM}{\begin{bmatrix}}
\newcommand{\EM}{\end{bmatrix}}
\newcommand{\bPhi}{{\bf \Phi}}
\newcommand{\bone}{{\bf 1}}
\newcommand{\va}{\vec{a}}
\newcommand{\vB}{\vec{B}}
\newcommand{\vb}{\vec{b}}
\newcommand{\dd}[2]{\frac{{\rm d}#1}{{\rm d}#2}}
\newcommand{\ddt}{\dd{}{t}}
\newcommand{\vh}{\vec{h}}
\newcommand{\vJ}{\vec{J}}
\newcommand{\vw}{\vec{w}}
\newcommand{\qd}{\dot{\q}}
\newcommand{\vv}{\vec{v}}
\newcommand{\vX}{\vec{X}}
\newcommand{\XM}[2]{{}^{#1}\vX_{#2}}
\newcommand{\XMT}[2]{{}^{#1}\vX_{#2}\T}
\newcommand{\beq}{\begin{equation}}
\newcommand{\eeq}{\end{equation} }
\newcommand{\f}{\mathbf{f}}
\newcommand{\delt}[2][]{\ifthenelse{\equal{#1}{}} { \delta \mathbf{#2}}{\delta \mathbf{#2}_{#1}}}
\newcommand{\R}{\mathbb{R}}
\newcommand{\X}{\mathfrak{X}}
\renewcommand{\se}[1]{\mathsf{se}(#1)}
\renewcommand{\l}{\langle \! \langle}
\renewcommand{\r}{\rangle \! \rangle}
\newcommand{\Q}{\mathcal{Q}}
\newcommand{\Qb}{\overline{\mathcal{Q}}}
\newcommand{\nablab}{\overline{\nabla}}
\renewcommand{\qd}{\dot{q}}
\newcommand{\Gb}{\overline{G}}
\newcommand{\Cb}{\overline{C}}
\newcommand{\Gammab}{\overline{\Gamma}{}}
\renewcommand{\R}{\mathbb{R}}
\renewcommand{\qd}{\dot{q}}
\renewcommand{\L}{\mathscr{L}}
\newcommand{\Lie}[2]{ \L_{#1}{#2} }
\newcommand{\G}{\mathbb{G}}
\renewcommand{\Gb}{\overline{\G}}
\renewcommand{\vb}{\overline{v}}
\newcommand{\Hb}{\overline{H}}
\renewcommand{\Cb}{\overline{C}}
\newcommand{\Xb}{\overline{X}}
\DeclareMathOperator*{\argmin}{argmin}
\newcommand{\gh}{\mathsf{h}}
\newtheorem{theorem}{Theorem}[section]
\newtheorem{proposition}{Proposition}[section]
\newtheorem{remark}{Remark}
\newtheorem{corollary}{Corollary}[section]
\newcommand{\addspace}[1]{}
\newcommand{\later}[1]{}
\newcommand\openbigstar[1][0.7]{%
  \scalebox{.5}{\scalerel*{%
    \stackinset{c}{-.125pt}{c}{}{\scalebox{#1}{\color{white}{$\bigstar$}}}{%
      $\bigstar$}%
  }{\bigstar}}
}
\renewcommand{\star}{{\openbigstar[.5]}}
\title{\Large \bf Coriolis Factorizations and their {\em Connections} to Riemannian Geometry\\[-1ex]}
\author{ Patrick M.~Wensing$^{1}$
and Jean-Jacques E.~Slotine$^{2}$\\[-4ex]~
       \thanks{$^{1}$ Dept.~of Aero.~\& Mech.~Engineering, University of Notre Dame, IN-46556, USA. {\tt pwensing@nd.edu}
       }
       \thanks{$^{2}$ Dept.~of Mech.~Eng., Dept.~of Brain and Cognitive Sciences, and Nonlinear
Systems Laboratory, Massachusetts Institute of Technology, Cambridge, MA, USA 02139 {\tt jjs@mit.edu}
       } 
       }
\begin{document}

\maketitle
\begin{abstract}
Many energy-based control strategies for mechanical systems require the choice of a Coriolis factorization satisfying a skew-symmetry property. This paper (a) explores if and when a control designer has flexibility in this choice, (b) develops a canonical choice related to the Christoffel symbols, and (c) describes how to efficiently perform control computations with it for constrained mechanical systems. We link the choice of a Coriolis factorization to the notion of an affine connection on the configuration manifold and show how properties of the connection relate with the associated factorization. In particular, 
the factorization based on the Christoffel symbols is linked with a torsion-free property that can limit the twisting of system trajectories during passivity-based control. We then develop a way to induce Coriolis factorizations for constrained mechanisms from unconstrained ones, which provides a pathway to use the theory for efficient control computations with high-dimensional systems such as humanoids and quadruped robots with open- and closed-chain mechanisms. 
\ifincludeappendices
A collection of algorithms is provided (and made available open source) to support the recursive computation of passivity-based control laws, adaptation laws, and regressor matrices in future applications.  
\else 
\fi

\end{abstract}

\section{Introduction}

When studying the control of mechanical systems, a common starting point is the system dynamics equation:
\begin{equation}
H(q) \dot{v} + c(q,v) + g(q) = \tau
\label{eq:eom}
\end{equation}
where $q \in \Q$ represents the system configuration with  $\Q$ the configuration manifold, $v\in \R^n$ the generalized velocities, $H(q) \in \mathbb{R}^{n\times n}$ the mass matrix, $c(q,v) \in \R^n$ the Coriolis and centripetal terms, $g(q) \in \mathbb{R}^n$ the generalized gravity force, and $\tau \in \mathbb{R}^n$ the vector of generalized applied forces. 
The Coriolis and centripetal terms depend quadratically on the velocity variables and can be factored as:
\begin{equation}
C(q,v)v = c(q,v)
\label{eq:factorization}
\end{equation}
where $C(q,v) \in \R^{n\times n}$ is linear in $v$ and is referred to as a Coriolis matrix or Coriolis factorization. This paper investigates the properties of such factorizations through the lens of Riemannian geometry (Fig.~\ref{fig:summary}) \cite{do1992riemannian}, and develops methods for computing factorizations with desired properties.


Many factorizations of $c(q,v)$ are possible. For example \cite{bjerkeng2012new}, if $v = [v^1, v^2]\T$ and $c(q,v) = [v^1 v^2 , 0]\T$, valid choices include:
\[
C_a =  \scalebox{.8}{$\begin{bmatrix} v^2 & 0  \\ 0 &  0 \end{bmatrix}$} \quad  C_b = \scalebox{.8}{$\begin{bmatrix} 0 & v^1  \\ 0 &  0 \end{bmatrix}$} 
\]
or any combination of the form $\alpha C_a + (1-\alpha) C_b$ ($\alpha \in \R$).
In applications involving passivity-based control, an additional condition is often imposed:
\begin{equation}
\dot{H} = C+C\T
\label{eq:skew}
\end{equation}
which is equivalent to $\dot{H} -2 C$ being skew-symmetric. When working with generalized coordinates $(q^1, \ldots, q^n)$ with $v^i = \dot{q}^i$, a canonical choice satisfying \eqref{eq:skew} is given by:
\begin{equation}
C_{ij}^\star = \Gamma_{ijk}^\star \dot{q}^k
\label{eq:christoffel_C}
\end{equation}
where Einstein summation (over $k$) convention is adopted and 
\begin{equation}
\Gamma_{ijk}^\star =  \scalebox{1.15}{$\frac{1}{2} \left[ \frac{\partial H_{ij}}{\partial q^k} + \frac{\partial H_{ik}}{\partial q^j}  - \frac{\partial H_{jk}}{\partial q^i} \right]$}
\label{eq:christoffel}
\end{equation}
 are the Christoffel symbols of the first kind. Earlier works \cite{slotine1987adaptive,ortega1989adaptive} often used this factorization, with some suggesting its uniqueness in satisfying \eqref{eq:factorization} and \eqref{eq:skew} \cite{Siciliano}. 
 These observations raise fundamental questions that we undertake herein:

 \vspace{5px}

 \begin{figure}
    \centering
    \includegraphics[width=\columnwidth]{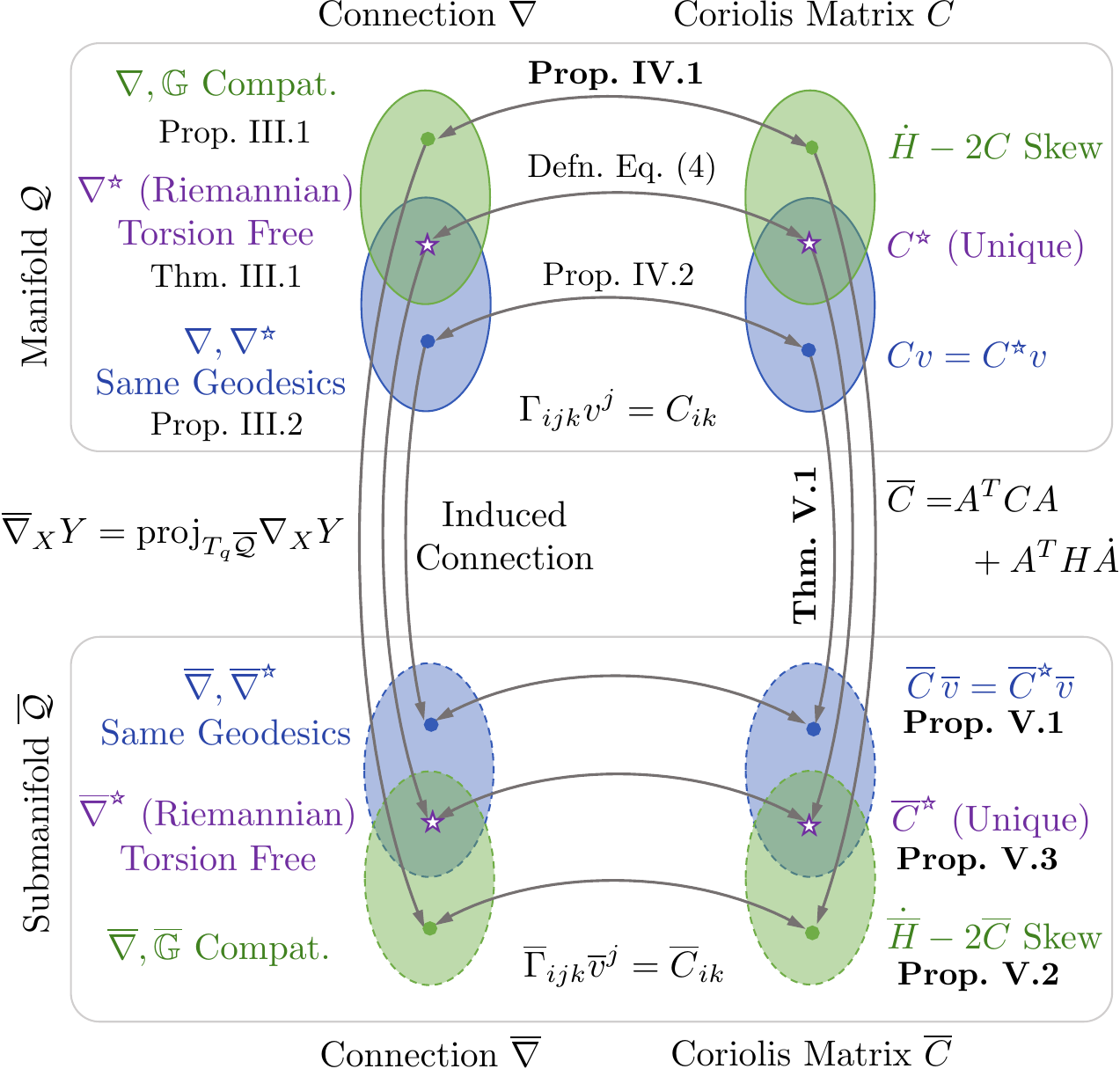}
    \caption{Roadmap. For a mechanical system on a Riemannian manifold $\mathcal{Q}$ with kinetic energy metric $\mathbb{G}$ we study how the properties of any chosen affine connection $\nabla$ relate with those of a corresponding Coriolis matrix $C$ and mass matrix $H$ w.r.t. a choice of generalized velocities $v$. We show how these properties for the system on $\mathcal{Q}$ can be related to a constrained system that evolves on a submanifold $\overline{\mathcal{Q}}$ where the dynamics are described with generalized velocities $\overline{v}$ that are related to $v$ by $v = A \overline{v}$. 
    Bolded results indicate the main contributions.
    }
    \label{fig:summary}
\end{figure}

\begin{enumerate}[label={\bf (Q\arabic*)}, ref=(Q\arabic*), itemsep=2pt]

\item Given the relationship between  \eqref{eq:christoffel_C} and the Christoffel symbols, is there a deeper relationship between other Coriolis factorizations and geometric structures on $\Q$?

\item When do \eqref{eq:factorization} and \eqref{eq:skew} uniquely determine $C$? 


\item How does the canonical factorization \eqref{eq:christoffel_C} extend to generalized velocities?

\item How can such factorizations be practically computed for systems with high degrees of freedom?

\end{enumerate}
%
%
%
%
%
%
%
%

\noindent To summarize the results of the paper, we provide short answers to these questions below, leaving a self-contained introduction of all relevant nomenclature to the main body:

\begin{enumerate}[label={\bf (A\arabic*)}, ref=(Q\arabic*), itemsep=2pt]

\item Every factorization can be associated with an affine connection on $\Q$, wherein geometric properties of the connection are linked with \eqref{eq:factorization} and \eqref{eq:skew} (Fig.~\ref{fig:summary}).

\item Factorizations satisfying \eqref{eq:factorization} and \eqref{eq:skew} are uniquely defined for $n=1,2$. When $n\ge 3$, there are infinitely many such factorizations. For example, for a point mass with $\dot{p}= v \in \R^3$, Newton's equation $f = m \dot{v}$ admits factorizations $C(p,v)= \beta(p) \, (v\times)$ for any 
$\beta:\R^3\rightarrow \R$. 

\item  The canonical factorization \eqref{eq:christoffel_C} generalizes using the Riemannian connection on non-coordinate basis vector fields, ensuring a torsion-free 
property that can limit the twisting of trajectories during passivity-based control.
 
\item We propose a structure-preserving approach to induce factorizations for constrained systems  
 (Fig.~\ref{fig:summary}). By using the Riemannian connection for unconstrained rigid bodies, these results enable forming $C^\star$ from \eqref{eq:christoffel_C} for general constrained mechanical systems. 
 The development provides broader theoretical backing to a previous algorithm \cite[Alg.~1]{echeandia2021numerical} for computing the Coriolis matrix, which is shown to apply 
  to systems with local kinematic loops.
\end{enumerate}

The rest of the document fully develops these answers. 

\section{Related Work}
\label{sec:related_work}

We begin by taking a step back and considering how the structure of the Coriolis terms in \eqref{eq:eom} have influenced dynamics calculation methods and the control of mechanical systems.  The first numerical methods for computing the left side of \eqref{eq:eom} were Recursive Newton-Euler (RNE) approaches, appearing in the late 1970s, and motivated by dynamic analysis for walking machines \cite{orin1979kinematic,stepanenko1976dynamics}. Developments in the early 1980s soon after applied these same methods for computed-torque control \cite{luhwalkerpaul80} of robot manipulators. These and other computed-torque control laws require evaluating the Coriolis terms $c(q,v)$ (so they can be canceled out via feedback linearization), which can be accomplished via RNE methods in $O(n)$ complexity.

The dependency of computed-torque controllers on accurate dynamic models for exact cancellations pushed subsequent developments of energy-based methods \cite{slotine1988putting}. The original PD+ control law of Takagaki and Arimoto \cite{Arimoto81}, was an early example, and has served as motivation for passivity-based and adaptive control from the mid-1980s to the present day \cite{koditschek1984natural,slotine1987adaptive, ortega1989adaptive, ortega2013passivity, pucci2015collocated, dietrich2016passive, chopra2022passivity}. In many of these developments, a factorization is used to avoid exactly canceling Coriolis terms, where the satisfaction of the skew-symmetry property \eqref{eq:skew} for the Coriolis matrix plays a critical role in many Lyapunov arguments.  Passivity-based frameworks are also central for the stability of teleoperation with time delays (e.g., the work starting with \cite{anderson1989bilateral, niemeyer1991stable}). Throughout these works and many robotics texts (e.g., \cite{Siciliano, lynch2017modern}), the factorization based on Christoffel symbols, originally appearing in \cite{slotine1987adaptive}, has remained commonly adopted.

Within passivity-based control, one often requires computing the product $C(q,v) v_r$ where $v_r$ is some reference velocity. As a result, 
many authors \cite{Niemeyer90, niemeyer1991performance, lin1995skew, ploen1999skew} developed variants of the RNE algorithm that were capable of computing $C(q,v) v_r$ in place of $C(q,v)v$ to implement passivity-based controllers efficiently. Care was taken to ensure that the recursive computations were compatible with the skew property \eqref{eq:skew}, with \cite{Niemeyer90} providing the first RNE method compatible with the Christoffel-consistent factorization (for systems with revolute and prismatic joints). Later, \cite{lin1995skew} provided an explicit construction for the non-uniqueness of factorizations satisfying the skew-symmetry property for general mechanisms. 

While the majority of these developments were focused on manipulators, parallel advances considered Coriolis factorizations for underwater systems \cite{fossen1991nonlinear,schjolberg1996modeling} and passivity-based control of free-floating space manipulators \cite{wang2009passivity, wang2012recursive}. As a common thread, these lines of work considered Coriolis factorizations for unconstrained rigid bodies and related their properties to those of a constrained system as a whole. This approach motivates our strategy in Sec.~\ref{sec:computation}. Our contributions generalize these past results, showing, in hindsight, that \cite{Niemeyer90} provided a skew-symmetric RNE method compatible with the Christoffel-consistent factorization for a broader class of mechanisms than originally claimed.

The transpose of a Coriolis matrix (satisfying \eqref{eq:skew}) also appears in the Hamiltonian formulation of the dynamics \cite[Eq.~6]{DeLuca06} in the form $C(q,v)\T v$. As a result, the Coriolis matrix can be used to implement disturbance observers based on monitoring the generalized momentum \cite{DeLuca06, bledt2018contact}. This observation motivated numerical methods to compute $C$ explicitly \cite{DeLuca09,echeandia2021numerical}, the former of which has a closely related variant that was independently developed as part of the popular C/C++ Pinocchio dynamics package \cite{carpentier2019pinocchio}. In contact detection, any freedom in choosing $C$ is moot since the uniqueness of $\dot{H} v$ and $C(q,v)v$ implies that $C(q,v)\T v$ will result in the same product for any factorization satisfying \eqref{eq:skew}. An efficient method for computing this product is provided in \cite{wang2013recursive} for open-chain systems. The general results herein provide computational tools to extend disturbance-observer computations to general constrained mechanisms.

More recently, passivity-based versions \cite{henze2016passivity,englsberger2020mptc,kurtz2020approximate} of whole-body quadratic-programming-based controllers (e.g., \cite{escande2014hierarchical,kuindersma2016optimization}) have been pursued under the same robustness motivations as for original passivity-based manipulator control laws. In some cases, the configuration-space Coriolis matrix $C$ must be explicitly computed (e.g., in \cite{englsberger2020mptc}, using methods from \cite{garofalo2013closed}) to construct task-space target dynamics with desired passivity properties.  
This paper informs a canonical choice of factorization $C$ for these works and outlines efficient methods for computing it in free-floating systems described by generalized speeds.

As mentioned in the introduction, the main machinery for our developments comes from recognizing a relationship between factorizations of the Coriolis matrix and affine connections on the configuration manifold.
Given these links, the methods in the paper provide an immediate computational approach for calculating covariant derivatives \cite[Chap.~2]{do1992riemannian} associated with complex mechanical systems. This result could have bearing on geometry-based optimization strategies (e.g., \cite{boutselis2020discrete}), geometry-based observers \cite{mishra2022reduced}, or computational workflows to assess geometric properties such as differential flatness \cite{welde2022role} or nonlinear controllability \cite{lewis1997configuration}. More broadly, there is a growing interest in geometric methods for robotics \cite{jaquier2022riemannian}, which suggests a potential for other impacts by linking fundamental robot modeling choices more deeply with geometric properties.

Beyond the above robot control work, the paper also builds on foundational work on geometric mechanics. In particular, our answer to Q3 is motivated by the generalization of the Christoffel symbols from \cite{bullo2004geometric}.  The relationship noted between passivity-based control laws and affine connections in \cite{van2017} and \cite{reyes2019virtual} also provides a starting point for one of our propositions. We make the exact connections with this past work precise throughout. 

Overall, we contribute a comprehensive synthesis relating the properties of Coriolis matrices with affine connections, and detailing how these relationships can be used for computing Coriolis matrices with desired properties for complex mechanical systems described using generalized velocities.

The remainder of the document is organized as follows. In Sec.~\ref{sec:background} we introduce key concepts from differential geometry through the perspective of mechanics, including an introduction to affine connections and their properties. We then establish a link between the choice of an affine connection and the choice of a Coriolis factorization in Sec.~\ref{sec:link} (answering Q1-Q3 from the intro) and demonstrate the effect of factorization choice on passivity-based control. Sec.~\ref{sec:induce} describes general results showing how one can induce factorizations for a mechanical system evolving on a submanifold, while Sec.~\ref{sec:computation} takes the specific route of applying those results to constraining a set of rigid bodies (answering Q4), with a focus on computational considerations. 
\ifincludenotation
Table \ref{tab:notation} provides a summary of key notation across this development. 
\else
\fi
Concluding remarks are given Sec.~\ref{sec:conclude}.

\ifincludenotation

\begin{table}[h]
\centering
\small
\caption{Summary of Select Notation}
\begin{tabular}{ll}
\toprule
\textbf{Symbol} & \textbf{Description} \\
\midrule
Geometry \hspace{-240 px}\\ \midrule
$ \mathcal{Q}$          & Configuration manifold                   \\
$\X(\mathcal{Q})$       & Set of smooth vector fields over $\mathcal{Q}$ \\
$ \mathbb{G} $          & Riemannian metric on $ \mathcal{Q} $                               \\
$ \l \cdot, \cdot \r_q $ & Inner product on $ T_q\mathcal{Q} $, induced by $ \mathbb{G} $ \\
$ \Lie{X}{f} $          & Lie derivative of $ f $ along $ X $ \\
$[ \cdot, \cdot ] $     & Lie bracket of vector fields \\
$ \nabla $              & Generic affine connection                     \\
$ \nabla^\star $              & Riemannian connection                     \\
$D$ & Contorsion tensor \\
$ \Gamma_{ijk}^\star $  & Christoffel symbols of the first kind     \\
$ \overline{\nabla} $              & Induced affine connection                   \\
\midrule
\multicolumn{2}{l}{Mechanical Systems} \hspace{-240 px}
\\ \midrule
$ q \in \mathcal{Q} $   & Configuration    \\
$ T_q\mathcal{Q} $      & Tangent space at point $ q $ on $ \mathcal{Q} $                               \\
$ v \in \mathbb{R}^n $  & Generalized velocity vector                                                   \\
$ \dot{q} \in T_q\mathcal{Q} $             & Time derivative of the configuration            \\
$ X_i\in \X(\mathcal{Q}) $                & $i$-th basis vector field on $ \mathcal{Q} $ \\
$ H(q) \in \mathbb{R}^{n \times n} $ & Mass (or inertia) matrix                                   \\
$ C(q, v) \in \mathbb{R}^{n \times n} $ & Coriolis matrix                              \\
$ g(q) \in \mathbb{R}^n $ & Generalized gravitational forces                                             \\
$ \tau \in \mathbb{R}^n $ & Generalized applied forces or torques                                        \\
$ p = H(q) v \in \mathbb{R}^n $             & Generalized momentum \\
$ C_{ij}^\star $        & Canonical Coriolis matrix via $ \Gamma_{ikj}^\star \dot{q}^k $   \\
\midrule
\multicolumn{2}{l}{Constrained Mechanical Systems}\hspace{-240 px}\\ \midrule
$ \overline{\mathcal{Q}} \subseteq \mathcal{Q}$          & Constrained configuration manifold                  \\
$ \overline{X}_i \in \X(\overline{\mathcal{Q}}) $                & $i$-th basis vector field on $ \mathcal{\overline{Q}} $ \\
$ \overline{v} \in \mathbb{R}^m $  & Generalized velocity vector                                                 \\
$ A(q) \in \mathbb{R}^{n \times m}$                & Transformation matrix: $v = A \overline{v}$ \\
$ \overline{H}(q) \in \mathbb{R}^{m \times m} $ & Induced mass matrix                                  \\
$ \overline{C}(q, \overline{v}) \in \mathbb{R}^{m \times m} $ & Induced Coriolis matrix                             \\
\midrule
Multibody \hspace{-240 px}\\ \midrule
$\mathbf{v}_i$                   & Spatial velocity of $i$-th body \\
$\mathsf{v}_i$                   & Stacked spatial velocities for $i$-th cluster \\
${\tt v}_i / \overline{{\tt v}}_i$                  & Generalized velocities for the $i$-th joint \\
$[\cdot]^\wedge$                & Hat map from $\mathbb{R}^6$ to $\se{3}$ \\    
$\XM{i}{j}$                     & Spatial transform from body $j$ to $i$ \\
$\bPhi_j$                       & Joint mode matrix for joint $i$ \\ 
$\theta_i$             & Inertial parameters for body/cluster $i$ \\
\bottomrule
\end{tabular}
\label{tab:notation}
\end{table}

\else
\fi

\section{Background}
\label{sec:background}

The goal of this section is to build intuition into differential geometry by bringing physical insights to the concepts of covariant derivatives and affine connections. Readers already well-versed in these topics may choose to skip subsections \ref{sec:mechanical_systems} and \ref{sec:riemannian_geometry}, though we hope they would still discover a few new insights along the way.

\subsection{Mechanical Systems}
\label{sec:mechanical_systems}
We consider a mechanical system with a smooth $n$-dimensional configuration manifold $\Q$. Denoting by $\mathfrak{X}(\Q)$ the set of smooth vector fields on $\Q$, we consider a collection of vector fields $X_1,\ldots,X_n\in \X(\Q)$ so that at each point $q \in \Q$ \[
\text{span}(X_1(q), \ldots, X_n(q)) = T_q \Q
\]
where $T_q \Q$ gives the tangent space to $\Q$ at $q$. We adopt generalized velocities $v\in \R^n$ to describe the dynamics of the system, such that at each $q\in Q$ we write $\qd \in T_q \Q$ as $\qd = X_i v^i$ with Einstein summation convention adopted.  The potential energy $V(q)$ of the system is purely configuration dependent, while the kinetic energy can be written as $T(q,v) = \frac{1}{2} v\T H(q) v$ where $H(q)\in \mathbb{R}^{n\times n}$ is the symmetric positive definite mass matrix from \eqref{eq:eom}. 

Via any variety of methods (e.g., Hamel's equations, Kane's method, Gibbs Appel, etc.), the dynamics of the system are known to take the form \eqref{eq:eom}. 
The components of the generalized conservative forces are given by $g_i(q) = \Lie{X_i}{V}(q)$  where $\Lie{X_i}{V}$ denotes the Lie derivative of $V$ along $X_i$ \cite[Ch.~7, Sec.~2.1]{murray2017mathematical}. 

\begin{remark}
If we write our Lagrangian as $L(q,v) = \frac{1}{2} v\T H(q) v-V(q)$, then the equations of motion given by Hamel's equations \cite{Hamel24,ball2012variational,muller2023hamel} are:
\begin{equation}
\tau_i = \frac{{\rm d}}{{\rm d}t} \frac{\partial L}{\partial v^i} + s_{\addspace{k}ij}^k \frac{\partial L}{\partial v^k} v^j - \Lie{X_i}{L} 
\label{eq:hamel}
\end{equation}
where $s_{ij}^k:\Q \rightarrow \R$ are the structure constants satisfying:
\[
s^k_{\addspace{k}ij} X_k = [X_i, X_j]
\]
with $[X_i,X_j]$ the Lie bracket between vector fields. In the absence of nonconservative forces $\tau$, any solution $q(t)$ to \eqref{eq:hamel} will be an extremal curve of the action integral
\begin{equation}
\textstyle{ \int_{t_0}^{t_f} L(q(t) , v(t) ) {\rm d}t} 
\label{eq:action}
\end{equation}
under all variations with endpoints fixed.
\end{remark}

In the common case when $X_1, \ldots, X_n$ are chosen to be coordinate vector fields, generalized coordinates $q^1, \ldots, q^n$ can be adopted. In this case, $[X_i, X_j]=0$ pairwise (i.e., all structure constants are zero), and so \eqref{eq:hamel} simplifies to the Euler-Lagrange equations. 
While there are many ways of defining the Coriolis matrix for these equations, the particular choice \eqref{eq:christoffel_C} using Christoffel symbols can have benefits, as discussed in the next subsection. Unfortunately, calculating the Christoffel symbols symbolically is not tractable for high-degree-of-freedom systems. 

The symbols can be computed numerically for some restricted classes of mechanisms (e.g., those with prismatic and revolute joints) \cite{echeandia2021numerical}, but other systems with closed kinematic loops are not treated by available algorithms. Further, humanoid and quadruped robots are frequently modeled with a 6-DoF free joint between the world and their torso that is modeled using generalized velocities to avoid representation singularities associated with Euler angles. 
We aim to accommodate these general settings, and do so by making use of a geometric treatment of the problem.

\subsection{Riemannian Geometry Through the Lens of Mechanics}
\label{sec:riemannian_geometry}

The kinetic energy $T(q,v)= \frac{1}{2} v\T H(q) v$ allows us to associate each point $q$ with an inner product on $T_q\Q$. We call this collection of inner products a metric, denoted $\mathbb{G}$, with the maps written as $\l \cdot, \cdot \r_q : T_q\Q \times T_q\Q \rightarrow \mathbb{R}$. We make this correspondence such that $\l \dot{q}, \dot{q} \r_q = v\T H(q) v$ for any $\dot{q} = X_i(q) v^i
 \in T_q \Q$.

\begin{figure}
    \centering
    \includegraphics[width=.6\columnwidth]{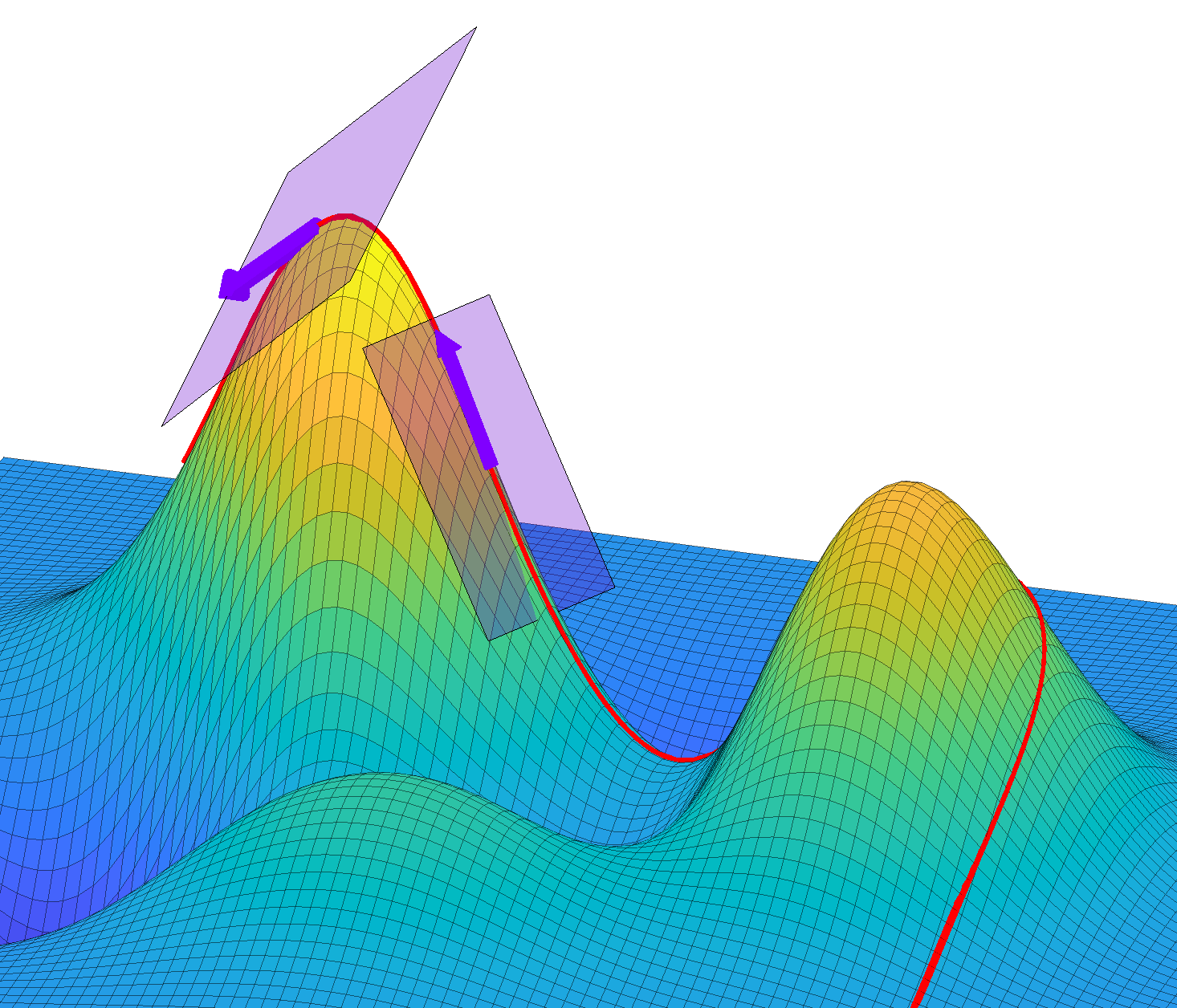}
    \caption{Motion of a particle on a frictionless surface in the absence of gravity. Equivalently, a geodesic on the surface. The particle acceleration is always normal to the surface (since constraint forces are normal to it). Equivalently, the covariant derivative of its velocity vector w.r.t. the Riemannian connection is zero along the curve.}
    \label{fig:surf}
\end{figure}

In the absence of potential forces, the action \eqref{eq:action} simplifies, and trajectories of the system must trace extremal curves of:
\[
\textstyle{\int_{t_0}^{t_f}} \l \dot{q}(t), \dot{q}(t) \r_{q(t)} {\rm d} t
\]
under variations with endpoints fixed. Such curves correspond to geodesics -- paths that minimize distances according to the metric $\mathbb{G}$. 
Returning to our mechanics interpretation, let us consider the physics of such an extremal curve via an example of a point mass evolving on a frictionless 2D manifold embedded in $\mathbb{R}^3$ (as drawn in Fig.~\ref{fig:surf}). With no friction or gravity, the only forces are from constraints, which act normal to the surface. 
Consequently, the particle's acceleration is always normal to the surface, and its speed, proportional to $\sqrt{\l \dot{q}, \dot{q} \r}_q$, remains constant along the geodesic.


Put another way, for a point mass moving under the influence of constraint forces alone, the projection of the acceleration onto the local tangent plane is always zero. This insight motivates us to define what we call a {\em covariant derivative} for the velocity vector along the curve (a directional derivative of sorts), which is obtained by projecting the acceleration vector at each point onto the local tangent plane. For the curve to be a geodesic, the covariant derivative of its velocity must be zero along the curve. 

More generally, we consider our manifold $\Q$ with the metric $\G$ (i.e., such that $(\Q,\G)$ defines a Riemannian manifold). We further consider $\Q$ isometrically embedded in some Euclidean space, where the length of any path according to the metric is simply the usual Euclidean length in the ambient space\footnote{Such embeddings are always possible according to the Nash embedding theorem \cite{nash1956imbedding}.}. Using a similar construction as above, we can give the rate of change of a vector field $Y$ in the direction of another $X$ as $\nabla_X^\star Y$ via taking a conventional directional derivative of $Y$ in the ambient space, and projecting the result, pointwise, to the local tangent space. This manner of differentiating vectors via projection to {\em connect} nearby tangent spaces is natural, and for this reason, we call $\nabla^\star$ \underline{the} Riemannian connection associated with $(\Q,\G)$.  For any pair of smooth vector fields $X, Y\in\X(\Q)$ and any scalar-valued function $f$ on $\Q$ this operation satisfies; 
\begin{enumerate}[label=A\arabic*), ref=A\arabic*)]
    \item $\nabla^\star_{fX}Y = f \nabla^\star_{X} Y$ 
    \item $\nabla^\star_X f Y = f \nabla^
    \star_X Y + ( \Lie{X}{f} ) Y$ 
\end{enumerate}

Any bilinear mapping $\nabla: \X(\Q) \times \X(\Q) \rightarrow \X(\Q)$ (i.e., not just the Riemannian connection) satisfying the above two properties is called an affine connection, and defines an alternative manner for taking directional derivatives of a vector field. 
For a vector field $Z$ defined only along a curve $\gamma: \R \rightarrow \Q$ with $Z(s) \in T_{\gamma(s)} \Q$, the covariant derivative of $Z$ along $\gamma$ is denoted $\nabla_{ \dot{\gamma} } Z$. A curve $\gamma$ is a geodesic of the connection if $\nabla_{\dot{\gamma}} \dot{\gamma} = 0$. 
Notably, geodesics of the Riemannian connection $\nabla^\star$ enjoy a special property in that they correspond with geodesics curves (i.e., of extremal length) for the metric $\mathbb{G}$. In general, geodesics of an arbitrary connection need not be geodesic curves of the metric.

An affine connection also defines a manner of parallel transporting tangent vectors along a curve. For a vector field $Z$ along a curve $\gamma$, we say that the collection of tangent vectors $Z(s) \in T_{\gamma(s)} \Q$ are parallel according to the connection if $\nabla_{\dot{\gamma}}Z =0$. Given some initial $Z(0)$, the condition $\nabla_{\dot{\gamma}}Z =0$ gives a differential equation for $Z(s)$ that then uniquely determines how to parallel transport $Z(0)$ along the curve. 
Note that parallel transport for the Riemannian connection relies on local projections to connect nearby tangent spaces. Since these projections are distance optimizing, parallel transport according to $\nabla^\star$ can be viewed as preventing extra spinning of $Z$ around $\gamma$, and $\nabla^\star$ is said to be torsion-free.

It is worth mentioning that while we have introduced operations of covariant differentiation and parallel transport using an extrinsic view of a manifold $\Q$ (i.e., by considering an embedding of it in Euclidean space), these operations do not depend on the details of that embedding, and so can be viewed as ones defined intrinsically on the manifold itself.

To this end, we can represent $\nabla^\star$ via a set of components at each point using our basis vector fields $X_1, \ldots, X_n$. Let us consider a choice of coordinate vector fields with symbols $\Gamma_{ijk}^\star $ defined by \eqref{eq:christoffel}. We can relate these symbols to those of the second kind via 
\[
\Gamma^{i \star}_{\addspace{i}jk} = H^{i \ell} \Gamma_{\ell jk}^\star
\]
where $H^{ij}$ denotes $( H^{-1})_{ij}$. These symbols can be shown to give the covariant derivative of any basis vector field along another, according to:
\[
\nabla_{X_j}^\star X_k = \Gamma^{i \star}_{\addspace{i}jk} X_i 
\]
These relationships then uniquely specify the covariant derivative w.r.t. the Riemannian connection on any generic pair of vector fields via properties A1 and A2.

\subsection{Properties of Affine Connections}
\iftransportfig
\begin{figure*}
\center \includegraphics[width=.7 \textwidth]{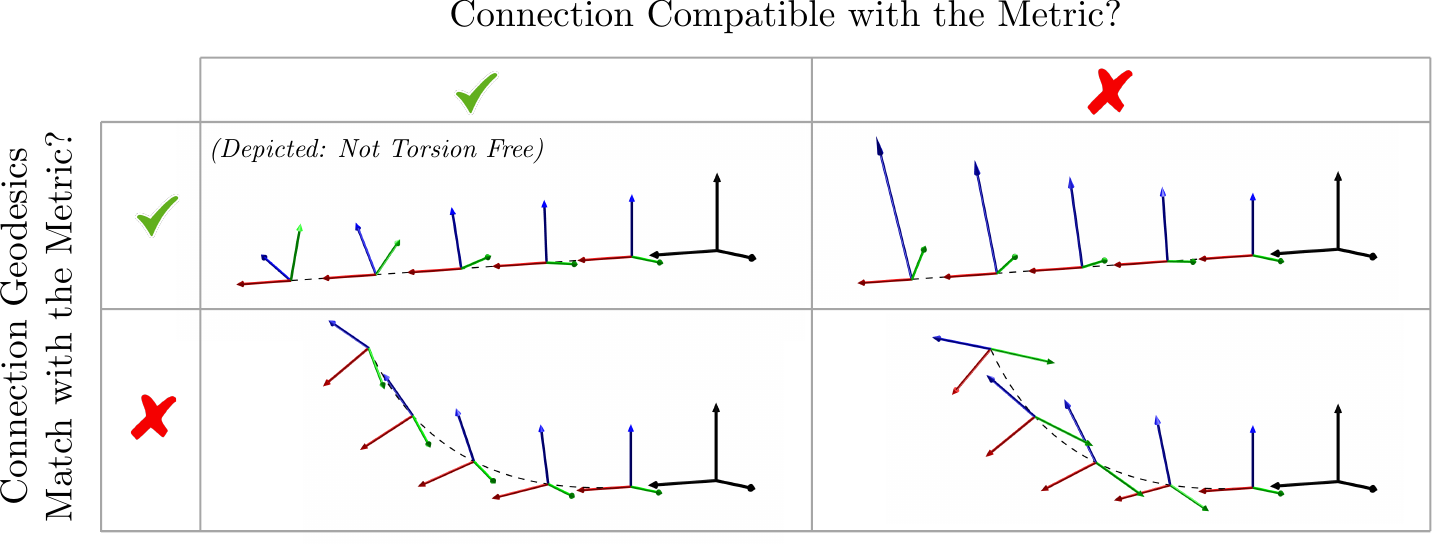}
\caption{We consider parallel transport along the geodesics of four different affine connections. In each case, three vectors are transported -- they start mutually orthogonal, and the geodesic is considered along the initial red vector. The connection is compatible with the Euclidean metric (left column) if the coordinate system remains orthonormal throughout. This notion of being compatible with the metric is distinct from the geodesics of the connection being the same as those of the Euclidean metric (top row). The upper left corner shows a connection with torsion. In this case, the unique torsion-free metric-compatible connection  would instead keep the coordinate system at a constant orientation.}
\label{fig:transport}
\end{figure*}
\fi
More generally, we can uniquely specify an affine connection $\nabla$ by fixing its connection coefficients\footnote{We reserve the name "Christoffel Symbols" to refer to the coefficients of the Riemannian connection.} $\Gamma_{ijk} :\Q \rightarrow \R$ and defining the covariant derivative w.r.t. $\nabla$ according to $\nabla_{X_j} X_k = \Gamma^i_{jk} X_i$. Given the immense flexibility this provides, we consider a few desirable properties 
\iftransportfig
(illustrated in Fig.~\ref{fig:transport}) 
\fi
for a connection, and whose relationships are diagrammed in Fig.~\ref{fig:summary}. 

\subsubsection{Metric Compatibility} A connection is said to be compatible with the metric if $\forall X,Y,Z \in \X(\Q)$
\[
\Lie{X}{ \l Y , Z \r } = \l \nabla_X Y, Z \r + \l Y, \nabla_X Z \r\,,
\]
which ensures, in particular, that any orthonormal set of tangent vectors (rooted at the same base point) will remain orthonormal under parallel transport. 
As would be expected, it can be shown that the Riemannian connection $\nabla^\star$ is compatible with its metric $\G$. Given its canonical role, we consider the difference between the Riemannian connection and another connection according to:
\[
D(X,Y,Z) = \l X, \nabla^\star_Y Z \r - \l X, \nabla_Y Z \r
\]
which we call a contorsion tensor \cite{nakahara2003geometry}, with components:
\[
D_{ijk}= D(X_i, X_j, X_k) = \Gamma_{ijk}^\star -\Gamma_{ijk}
\]

\begin{proposition}[Metric-Compatible Connections]
A connection  $\nabla$ is compatible with a metric $\G$ iff its associated contorsion tensor $D$ is anti-symmetric in the first and third arguments (i.e., $D(X,Y,Z) = - D(Z,Y,X)$). Equivalently, the components $D_{ijk}$ are anti-symmetric in indices (1,3).
\label{prop:metric_compat_connections}
\end{proposition}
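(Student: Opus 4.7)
The plan is to reduce the equivalence to a simple algebraic manipulation by exploiting the fact that $\nabla^\star$ is already known to be metric-compatible, so that any other metric-compatible connection must differ from it in a constrained way.

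First I would verify that $D$, as defined, is genuinely a $(0,3)$-tensor on $\Q$. Tensoriality in the first slot is immediate from the bilinearity of $\l\cdot,\cdot\r$. Tensoriality in the second slot follows from property A1 applied to both $\nabla^\star$ and $\nabla$, which produces matching $f$ factors. The only mildly delicate check is in the third slot: applying A2 to $\nabla^\star$ and to $\nabla$ each introduces a term $(\Lie{Y}{f})\l X, Z\r$, and these terms cancel in the difference, leaving $D(X,Y,fZ)=f\,D(X,Y,Z)$. This tensoriality is what immediately gives the ``equivalently'' clause: once $D$ is multilinear, the pointwise pairwise antisymmetry $D(U,Y,W) = -D(W,Y,U)$ for all $U,W$ is equivalent to the component statement $D_{ijk} = -D_{kji}$ on any local frame $X_1,\ldots,X_n$.

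Next, for the forward implication, suppose $\nabla$ is compatible with $\G$. Writing out the compatibility identity for both $\nabla^\star$ and $\nabla$ and subtracting yields
\[
0 = \l \nabla^\star_X Y - \nabla_X Y, Z\r + \l Y, \nabla^\star_X Z - \nabla_X Z\r = D(Z,X,Y) + D(Y,X,Z),
\]
so $D(Z,X,Y) = -D(Y,X,Z)$, which is exactly antisymmetry in the first and third arguments. For the converse, assume $D$ is antisymmetric in slots $(1,3)$. Starting from the known metric compatibility of $\nabla^\star$, substitute $\nabla^\star_X Y = \nabla_X Y + (\text{correction expressed via } D)$ into $\Lie{X}{\l Y,Z\r}$. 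The two correction terms pair up as $D(Z,X,Y) + D(Y,X,Z)$, which vanishes by hypothesis, leaving $\Lie{X}{\l Y, Z \r} = \l \nabla_X Y, Z\r + \l Y, \nabla_X Z\r$, i.e., $\nabla$ is metric-compatible.

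The argument is short and the main obstacle is essentially bookkeeping: one must take care to keep the slot ordering of $D$ consistent with the definition (the first slot is the one paired by $\l\cdot,\cdot\r$, while the second is the differentiation direction). Once that is fixed, the proof reduces to a single add/subtract between the compatibility identities for $\nabla$ and $\nabla^\star$, with tensoriality of $D$ promoting the pointwise antisymmetry to a global statement on arbitrary vector fields.
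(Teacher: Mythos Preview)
Your proof is correct and complete. The paper does not actually provide a proof of this proposition; it simply refers the reader to an exercise in Nakahara's textbook. Your argument --- checking tensoriality of $D$, then subtracting the two compatibility identities for $\nabla^\star$ and $\nabla$ --- is exactly the standard one and is carried out carefully, including correct tracking of the slot ordering of $D$.
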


\begin{proof}
See \cite[Section 7.2.6]{nakahara2003geometry} for the setup of an exercise on this result.
\end{proof}

A connection is said to be torsion-free if $\forall X, Y \in \X(\Q)$
\[
[X,Y] = \nabla_X Y - \nabla_Y X
\]
As motivated previously, a connection being torsion-free roughly ensures that vectors do not twist around a curve when parallel transported. This condition is the same as requiring that the structure constants $s^i_{jk}$ satisfy 
$s^i_{jk} = \Gamma^i_{jk} - \Gamma^i_{kj}
$, 
which reduces to $\Gamma^{i}_{jk} = \Gamma^i_{kj}$ or equivalently $\Gamma_{ijk} = \Gamma_{ikj}$ when working with coordinate vector fields. Considerations of connections being metric-compatible and torsion-free lead to the fundamental theorem of Riemannian geometry:


\begin{theorem}[Riemannian Connection, \cite{do1992riemannian}] On a Riemannian manifold $\Q$ there exists a unique affine connection $\nabla$ that is compatible with the metric and torsion free. This connection is the Riemannian (or Levi-Civita) connection $\nabla^\star$.
\end{theorem}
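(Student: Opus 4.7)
I would prove the theorem by separating existence and uniqueness, using the contorsion tensor $D$ from Proposition~\ref{prop:metric_compat_connections} as the organizing device. For uniqueness, suppose $\nabla$ is any connection satisfying both hypotheses, and let $D$ be the contorsion tensor recording its difference from $\nabla^\star$; the goal is to show $D \equiv 0$, so that $\nabla = \nabla^\star$.

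Proposition~\ref{prop:metric_compat_connections} immediately converts metric-compatibility of $\nabla$ into the $(1,3)$-antisymmetry $D_{ijk} = -D_{kji}$. For the torsion-free hypothesis, since both $\nabla$ and $\nabla^\star$ are torsion-free, subtracting the two torsion identities evaluated on basis fields gives $\Gamma^i_{jk} - \Gamma^i_{kj} = s^i_{jk} = \Gamma^{i\star}_{jk} - \Gamma^{i\star}_{kj}$, so the antisymmetric-in-$(j,k)$ parts cancel; contracting with $H_{i\ell}$ this translates to the $(2,3)$-symmetry $D_{ijk} = D_{ikj}$. The finish is then a six-step cyclic cancellation that alternates the two symmetries: $D_{ijk} = -D_{kji} = -D_{kij} = D_{jik} = D_{jki} = -D_{ikj} = -D_{ijk}$, forcing $D \equiv 0$.

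For existence, I would verify directly that the $\nabla^\star$ introduced earlier via the projection construction (equivalently, on coordinate fields, by~\eqref{eq:christoffel}) satisfies both conditions and the connection axioms A1--A2. Torsion-freeness on a coordinate basis is immediate from the manifest symmetry of~\eqref{eq:christoffel} in $(j,k)$; metric-compatibility follows by observing that the combination $\Gamma^\star_{ijk} + \Gamma^\star_{kji}$ collapses, after cancellations among the three partial-derivative terms, to $\partial_j H_{ik}$, which is precisely the identity metric-compatibility requires.

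The main obstacle I anticipate is the non-coordinate-basis bookkeeping on the existence side: both the torsion-free identity and the analogue of~\eqref{eq:christoffel} acquire correction terms involving the structure constants $s^k_{ij}$, so the coordinate-basis verification does not transfer verbatim. In practice I would introduce the general Koszul identity $2\langle \nabla^\star_{X_j} X_k, X_i\rangle = \Lie{X_j}{H_{ik}} + \Lie{X_k}{H_{ij}} - \Lie{X_i}{H_{jk}} + \langle [X_j,X_k], X_i\rangle - \langle [X_j,X_i], X_k\rangle - \langle [X_k,X_i], X_j\rangle$, show that it well-defines an affine connection, and verify both properties from this form directly. Fortunately, the uniqueness cycle above is basis-independent and requires no such refinement.
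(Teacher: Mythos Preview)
Your argument is correct, but note that the paper does not actually prove this theorem: it is stated with a citation to \cite{do1992riemannian} and then immediately followed by the Koszul formula, which the paper takes as a consequence. So there is no ``paper's own proof'' to compare against beyond the reference.

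That said, your route is a natural one \emph{given the paper's setup}, and in some sense more internal to the paper than the cited reference. The standard proof in \cite{do1992riemannian} derives the Koszul formula directly from the two hypotheses (showing uniqueness constructively) and then checks that the formula defines a connection. You instead leverage Proposition~\ref{prop:metric_compat_connections} and the torsion-free condition to pin down the contorsion tensor $D$ by the six-step cycle, which is cleaner once that proposition is in hand and foreshadows the paper's later use of antisymmetry properties of $D$ (Corollary~\ref{cor:antisymmetric}). Your existence side effectively rederives Koszul, which is exactly what the paper states next as~\eqref{eq:gen_chris}, so the non-coordinate bookkeeping you flag as an obstacle is already handled there. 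One minor point: your uniqueness argument silently assumes $\nabla^\star$ is itself metric-compatible and torsion-free (so that Proposition~\ref{prop:metric_compat_connections} applies and the structure constants cancel); strictly speaking that is part of the existence claim, so the logical order should be existence first, then uniqueness via $D$.
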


\begin{remark}
See \cite[App.~1.D]{arnol2013mathematical} for an alternative description of metric-compatible torsion-free parallel transport. 
\end{remark}

With this result, the Riemannian connection can be uniquely specified by the Koszul formula, which gives that for any smooth vector fields $X,Y,Z$ \cite[Thm.~3.6]{do1992riemannian}
\begin{align*}
2 \l X , \nabla^\star_Y Z \r = &  \Lie{Z}{ \l X, Y \r } + \l Z, [X,Y] \r \\
                         & + \Lie{Y}{ \l X, Z \r } + \l Y, [X, Z] \r \\
                         & - \Lie{X}{ \l Y, Z \r } + \l X, [Y, Z] \r 
\end{align*}
This definition then allows us to consider generalized Christoffel symbols \cite{bullo2004geometric} when working with non-coordinate basis vector fields, which are given according to:
\newcommand{\myscale}{.97}
\begin{align}
\scaleobj{\myscale}{2\Gamma_{ijk}^\star }&\scaleobj{\myscale}{= 2 \l X_i, \nabla_{X_j}^\star X_k \r} \label{eq:gen_chris}\\
&\scaleobj{\myscale}{=\Lie{X_k}{H_{ij}} + \Lie{X_j}{ H_{ik} }  - \Lie{X_i}{H_{jk}} } \nonumber \\
                             &\scaleobj{\myscale}{~~ +\l X_k, [X_i,X_j] \r + \l X_j, [X_i,X_k] \r + \l X_i, [X_j,X_k] \r } \nonumber \\
&\scaleobj{\myscale}{=\Lie{X_k}{H_{ij}} + \Lie{X_j}{ H_{ik} }  - \Lie{X_i}{H_{jk}} + s_{kij} + s_{jik} + s_{ijk} } \nonumber
\end{align}
where $s_{ijk} = H_{i \ell} s^\ell_{jk}$. If you removed the last term $s_{ijk}$ on the last line of \eqref{eq:gen_chris}, what remains would be symmetric in its (2,3) indices. So, letting $\Gamma^\star_{i (jk)} = \frac{1}{2} ( \Gamma^\star_{ijk} + \Gamma^\star_{ikj})$ denote the symmetrized version, we have
\begin{equation}
\Gamma^\star_{i jk} = \Gamma^\star_{i (jk)} + \frac{1}{2} s_{ijk}
\label{eq:gamma_symmetry}
\end{equation}
We'll see the significance of this result in Sec.~\ref{sec:diff}.

\subsubsection{Geodesic Agreement} We note that a connection being ``compatible with the metric" is different from requiring the geodesics of the connection to match those of the metric. We further build on this distinction through the following results.

\begin{proposition}[Geodesic Agreement]
A connection gives the same geodesics as the Riemannian connection iff its contorsion tensor is anti-symmetric in its second and third arguments (i.e., $D(X,Y,Z) = -D(X,Z,Y)$. Equivalently, the components $D_{ijk}$ are anti-symmetric in indices (2,3).
\end{proposition}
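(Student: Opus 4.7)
The plan is to translate geodesic agreement into a pointwise algebraic condition on the contorsion tensor $D$ and then close the argument by polarization. Recall a curve $\gamma$ is a geodesic of a connection $\nabla$ exactly when $\nabla_{\dot\gamma}\dot\gamma = 0$, and given any $(q_0, v_0) \in T\Q$ there is a unique local geodesic of $\nabla$ (and likewise of $\nabla^\star$) through $q_0$ with initial velocity $v_0$. So the geodesic families of $\nabla$ and $\nabla^\star$ coincide iff at every base point they produce the same initial acceleration from every initial velocity.

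First I would promote this pointwise statement to the global identity $\nabla_X X = \nabla^\star_X X$ for every $X\in \X(\Q)$. The key input is that the difference $B := \nabla - \nabla^\star$ of two affine connections is a $(1,2)$-tensor, i.e., $C^\infty(\Q)$-bilinear (this is a direct consequence of properties A1, A2 applied to both connections: the non-tensorial $(\Lie{X}{f}) Y$ term cancels). Consequently $B(X,X)(q)$ depends only on $X(q)$, so every $v_0 \in T_{q_0}\Q$ can be realized as $X(q_0)$ for some smooth extension $X$, and agreement of all initial accelerations is equivalent to $\nabla_X X = \nabla^\star_X X$ for every vector field $X$.

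Next I would use non-degeneracy of the metric $\G$ to rewrite $\nabla_X X = \nabla^\star_X X$ as $\l Y, \nabla^\star_X X - \nabla_X X \r = 0$ for every $Y \in \X(\Q)$, which by the definition of the contorsion tensor is exactly $D(Y, X, X) = 0$ for all $X,Y$. Then a standard polarization step finishes both directions: substituting $X \mapsto X + Z$ and using trilinearity of $D$ gives
\[
D(Y, X, Z) + D(Y, Z, X) = 0,
\]
which is anti-symmetry of $D$ in its second and third arguments; conversely, this anti-symmetry immediately forces $D(Y, X, X) = 0$. Evaluating on basis vector fields yields the equivalent statement $D_{ijk} = -D_{ikj}$ for the components.

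The only subtle step I anticipate is the pointwise-to-global reduction in the first paragraph; once tensoriality of $\nabla - \nabla^\star$ and local existence/uniqueness of geodesics are in hand, the remainder is routine algebra. A secondary check is that the polarization is valid because $D$ is genuinely trilinear (not merely $\R$-bilinear in each slot away from $q$), which again follows from the tensorial character of $B$ combined with $\G$-linearity in the first slot.
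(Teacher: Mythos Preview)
Your argument is correct and follows the standard route: reduce geodesic agreement to the pointwise identity $\Gamma^i_{jk}(q)v^jv^k = \Gamma^{\star i}_{jk}(q)v^jv^k$ via existence/uniqueness of geodesics, promote this to $\nabla_X X = \nabla^\star_X X$ using tensoriality of $\nabla-\nabla^\star$, pair with the metric to obtain $D(Y,X,X)=0$, and polarize. Each step is justified, including the one you flagged as subtle.

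As for comparison: the paper does not actually supply a proof of this proposition; it is stated and then immediately followed by the next theorem. So there is nothing in the paper to compare against, and your proof fills a genuine gap in the exposition. If anything, one could shorten your write-up by working directly in components from the start (the condition $\Gamma^i_{jk}v^jv^k = \Gamma^{\star i}_{jk}v^jv^k$ for all $v$ is already equivalent to symmetry of $\Gamma^i_{jk}-\Gamma^{\star i}_{jk}$ in $(j,k)$, hence to $D_{ijk}=-D_{ikj}$ after lowering the index), bypassing the extension-to-vector-fields discussion. But the coordinate-free version you give is cleaner and makes the role of tensoriality explicit.
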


Out of all possible connections with Riemannian geodesics, the Riemannian connection is the only one that is torsion-free. 

\begin{theorem}
Given any affine connection on $\Q$, there is a unique torsion-free connection with the same geodesics. 
\label{thm:geodesic_freedom}
\end{theorem}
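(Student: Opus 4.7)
The plan is to prove both existence and uniqueness simultaneously by exploiting the fact that the difference of two affine connections is a tensor, then showing that the torsion-free and geodesic-agreement requirements split into an anti-symmetric and symmetric part that can be set independently. I would first note that two connections $\nabla$ and $\tilde{\nabla}$ share geodesics iff $\nabla_X X = \tilde{\nabla}_X X$ for every $X \in \X(\Q)$, because the geodesic equation in components reads $\dot{v}^k + \Gamma^k_{ij} v^i v^j = 0$ (a form derivable immediately by expanding $\nabla_{\dot{\gamma}} \dot{\gamma}$ along $\dot{\gamma} = v^i X_i$), so only the symmetric part $\Gamma^k_{(ij)}$ enters the equation. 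Hence geodesic agreement is equivalent to the difference tensor $T(X,Y) := \tilde{\nabla}_X Y - \nabla_X Y$ being antisymmetric, i.e., $T(X,Y) = -T(Y,X)$.

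Next I would compute the torsion of the candidate connection in terms of the torsion $T_\nabla$ of the given connection. Using $\tilde{\nabla}_X Y - \tilde{\nabla}_Y X - [X,Y] = (\nabla_X Y - \nabla_Y X - [X,Y]) + T(X,Y) - T(Y,X) = T_\nabla(X,Y) + 2 T(X,Y)$ where the last equality uses antisymmetry of $T$. Imposing that $\tilde{\nabla}$ is torsion-free forces $T(X,Y) = -\tfrac{1}{2} T_\nabla(X,Y)$, which is already automatically antisymmetric (since $T_\nabla$ is), so the two requirements are consistent and yield a unique difference tensor. The unique torsion-free connection with the same geodesics is therefore
\begin{equation*}
\tilde{\nabla}_X Y = \nabla_X Y - \tfrac{1}{2} T_\nabla(X,Y).
\end{equation*}

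To finish, I would verify in two lines that this $\tilde{\nabla}$ satisfies properties A1 and A2 (inherited from $\nabla$ since $T_\nabla$ is tensorial), is torsion-free by direct substitution, and produces the same geodesics because $T_\nabla(X,X) = 0$ implies $\tilde{\nabla}_X X = \nabla_X X$. For a component-level cross-check I would use basis vector fields $X_i$ with structure constants $s^i_{jk}$: the torsion-free condition reads $\tilde{\Gamma}^i_{jk} - \tilde{\Gamma}^i_{kj} = s^i_{jk}$, pinning down the antisymmetric part $\tilde{\Gamma}^i_{[jk]} = \tfrac{1}{2} s^i_{jk}$, while geodesic agreement pins down the symmetric part $\tilde{\Gamma}^i_{(jk)} = \Gamma^i_{(jk)}$; these two halves determine $\tilde{\Gamma}^i_{jk}$ completely.

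The main obstacle is the initial equivalence between ``same geodesics'' and ``difference tensor antisymmetric.'' One direction (antisymmetry implies $\nabla_X X = \tilde{\nabla}_X X$) is trivial, but the converse requires observing that a symmetric $(1,2)$-tensor vanishing on every diagonal pair $(X,X)$ must vanish identically, which follows by polarization: expanding $S(X+Y, X+Y) = 0$ and using symmetry gives $S(X,Y) = 0$. Once this polarization step is secured, the rest of the argument is essentially bookkeeping on the antisymmetric/symmetric decomposition of the connection coefficients.
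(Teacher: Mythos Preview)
Your argument is correct and is essentially the standard proof (the one in Spivak, which the paper simply cites rather than reproducing). The paper's own ``proof'' is just the reference \emph{See \cite[Chapter 6, Corollary 17]{spivak2005comprehensive}}, so you have supplied the details the paper omits: split the difference tensor into symmetric and antisymmetric parts, observe that geodesic agreement fixes the symmetric part while torsion-freeness fixes the antisymmetric part to $-\tfrac{1}{2}T_\nabla$, and note the two conditions are compatible. Your polarization remark correctly handles the only nontrivial step.
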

\begin{proof}
See \cite[Chapter 6, Corollary 17]{spivak2005comprehensive}.
\end{proof}

If we combine Prop.~\ref{prop:metric_compat_connections} and Thm.~\ref{thm:geodesic_freedom}, we see that we still have a great deal of freedom even when restricting to connections that a) are compatible with the metric and b) give the same geodesics as the metric.

\begin{corollary} 
\label{cor:antisymmetric}    
A connection is metric-compatible and gives the same geodesics as the Riemannian connection iff its contorsion tensor $D(X,Y,Z)$ is totally anti-symmetric (i.e., in any pair of arguments). Equivalently, the components $D_{ijk}$ are anti-symmetric in any pair of indices.  
\end{corollary}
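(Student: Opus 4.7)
The plan is to derive the corollary as a direct combination of Proposition~\ref{prop:metric_compat_connections} and the preceding Geodesic Agreement proposition, together with the purely algebraic observation that a $(0,3)$-tensor which is antisymmetric in two distinct pairs of slots must be totally antisymmetric. No new geometric content is needed beyond what those propositions supply.

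First, I would dispatch both directions of the equivalence using the two established results. If $\nabla$ is metric-compatible, Proposition~\ref{prop:metric_compat_connections} gives $D_{ijk} = -D_{kji}$ (antisymmetry in indices $(1,3)$), while if $\nabla$ has the same geodesics as $\nabla^\star$, the Geodesic Agreement proposition gives $D_{ijk} = -D_{ikj}$ (antisymmetry in indices $(2,3)$). Conversely, total antisymmetry of $D_{ijk}$ trivially specializes to each pairwise antisymmetry, so both propositions apply in reverse to yield metric compatibility and geodesic agreement.

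The only remaining content is showing that antisymmetry in slots $(1,3)$ and $(2,3)$ together force antisymmetry in slots $(1,2)$. I would do this by direct index chasing: starting from $D_{ijk} = -D_{kji}$ and applying the $(2,3)$ antisymmetry to the right-hand side gives $D_{ijk} = D_{kij}$, a cyclic identity. Re-applying the $(1,3)$ antisymmetry to $D_{kij}$ yields $D_{kij} = -D_{jik}$, and chaining produces $D_{ijk} = -D_{jik}$, the missing antisymmetry. All three pairwise antisymmetries together are equivalent to total antisymmetry of the $(0,3)$-tensor, and a coordinate-free rephrasing in terms of $D(X,Y,Z)$ follows immediately by evaluating on basis vector fields $X_i, X_j, X_k$.

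I do not expect any significant obstacle; the argument is essentially bookkeeping on three indices, and no further appeal to the Koszul formula, the structure constants $s^k_{ij}$, or the explicit form of $\Gamma_{ijk}^\star$ is required. The only mild subtlety worth a one-line comment is that the contorsion tensor was defined with all lower indices via the metric, so raising or lowering an index does not affect whether a given pair of slots is antisymmetric; this lets us freely translate between the symbol-level antisymmetries used in the two input propositions and the tensorial statement in the corollary.
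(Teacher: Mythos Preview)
Your proposal is correct and matches the paper's intent: the corollary is stated without proof, as it is meant to follow immediately by combining Proposition~\ref{prop:metric_compat_connections} with the Geodesic Agreement proposition, and your index-chasing to obtain antisymmetry in the remaining pair $(1,2)$ is the standard one-line algebraic step the reader is expected to supply. The closing remark about index raising/lowering is harmless but unnecessary here, since both input propositions already work directly with the all-lower-index components $D_{ijk}$.
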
 

\begin{corollary}
\label{corr:InfinityOfConnections}
For manifolds of dimension one or two, there is a unique connection (namely the Riemannian connection) that is metric-compatible and gives the same geodesics as the Riemannian connection. In dimensions three or higher, there are infinitely many such connections.
\end{corollary}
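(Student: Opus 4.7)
The plan is to reduce the statement to a dimension count via Corollary~\ref{cor:antisymmetric}. That corollary characterizes the admissible connections by their contorsion tensors $D$ being totally anti-symmetric, so the set of connections that are both metric-compatible and have Riemannian geodesics is in bijection with the space of totally anti-symmetric covariant $3$-tensor fields on $\Q$ (the Riemannian connection itself corresponds to $D\equiv 0$). All that remains is to count the pointwise degrees of freedom available to such a $D$.

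First I would note that at each point $q\in\Q$ a totally anti-symmetric $(0,3)$-tensor is an element of $\Lambda^{3}T_q^*\Q$, whose real dimension is $\binom{n}{3}$. For $n=1$ and $n=2$ we have $\binom{n}{3}=0$, so $D_q=0$ for every $q$, giving $D\equiv 0$. By the definition $D_{ijk}=\Gamma^\star_{ijk}-\Gamma_{ijk}$, this forces $\Gamma_{ijk}=\Gamma^\star_{ijk}$ and hence $\nabla=\nabla^\star$, which establishes uniqueness in low dimensions.

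For $n\ge 3$ I would construct an explicit infinite family. Pick any smooth scalar function $f:\Q\to\R$, and in a chosen frame $X_1,\ldots,X_n$ define components
\begin{equation*}
D_{ijk} = f(q)\,\epsilon_{ijk}
\end{equation*}
for $i,j,k\in\{1,2,3\}$, with all other components zero, where $\epsilon_{ijk}$ is the totally anti-symmetric Levi-Civita symbol on three indices. Extending smoothly and antisymmetrizing defines a legitimate totally anti-symmetric tensor field for every choice of $f$, and distinct $f$ produce distinct connections via $\Gamma_{ijk}=\Gamma^\star_{ijk}-D_{ijk}$. Because $f$ ranges over an infinite-dimensional space, there are infinitely many such connections. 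By Corollary~\ref{cor:antisymmetric} each of these is metric-compatible and shares the geodesics of $\nabla^\star$.

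The only mild obstacle is ensuring the constructed family is genuinely distinct from $\nabla^\star$ as connections (not just as coordinate expressions) and that the construction is coordinate/frame-independent in spirit; this is handled by noting that $D$ is defined as a tensor field, so its vanishing is frame-invariant, and any nonzero choice of $f$ yields a nonzero tensor and thus a connection different from $\nabla^\star$. Everything else is a direct application of the preceding corollary and elementary multilinear algebra.
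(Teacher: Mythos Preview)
Your argument is correct and essentially identical to the paper's: both invoke Corollary~\ref{cor:antisymmetric} and then count the pointwise degrees of freedom in a totally anti-symmetric $(0,3)$-tensor (equivalently, a $3$-form), obtaining $\binom{n}{3}=0$ for $n\le 2$ and positive otherwise. The only cosmetic difference is that you spell out an explicit one-parameter family via the Levi-Civita symbol, whereas the paper simply notes the identification with the space of smooth $3$-forms on $\Q$.
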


\begin{proof}
\ifshortproof
The space of totally antisymmetric $(0,3)$ tensor fields on $\Q$ \cite{bullo2004geometric} is isomorphic to the space of smooth 3-forms on the manifold. There are no three forms on a manifold when $n=1,2$, but an infinite number exist when $n\ge3$.
\else
Consider any $n$-dimensional vector space $V$. The vector space of totally antisymmetric rank $d$ tensors on $V$ has dimension $\begin{pmatrix}n \\ d\end{pmatrix} = \frac{n!}{d! (n-d)!}$. At each $q \in \Q$, the vector space of anti-symmetric tensors on $T_q \Q$ will have dimensionality $\begin{pmatrix}n \\ 3\end{pmatrix}$. Thus, in dimensions $n=1,2$, any totally antisymmetric contortion tensor must be identically zero. When $n=3$, the set of anti-symmetric contortion tensor fields will be isomorphic to the set of scalar fields on $\Q$, presenting an infinity of options. Beyond $n=3$, there is only additional freedom. 
\fi
\end{proof}

\ifincludethreeform
\begin{remark}
A totally antisymmetric contorsion tensor is equivalent to a differential 3-form. Hence, the space of such tensors is isomorphic to the space of smooth 3-forms on the manifold. 
\end{remark}
\fi

Referring back to Fig.~\ref{fig:summary}, we see that the upper-left of the diagram displays a large area of overlap between those connections that are metric compatible and those whose geodesics are Riemannian, with the Riemannian connection residing in this intersection. 


\section{Relationship between Coriolis Factorizations and Affine Connections}
\label{sec:link}

This section creates a bridge between the properties of an affine connection and those of a Coriolis factorization that we derive from it. 
We first note that when working with coordinate vector fields, the Christoffel symbols are symmetric in the (2,3) indices, which means that we have the equivalence $C_{ij}^\star = \Gamma_{ijk}^\star v^k = \Gamma_{ikj}^\star v^k$. When working with generalized velocities, we can lose the $(2,3)$ symmetry of the symbols, and instead need to choose: Do we define our Coriolis matrix associated with the Riemannian connection via  $C_{ij}^\star =  \Gamma^\star_{ijk} v^k$, $\Gamma^\star_{ikj} v^k$, or the symmetric part of the symbols $\Gamma^\star_{i(jk)} v^j$? We find that, for general connections, the fruitful pairing is to choose:
\begin{equation}
C_{ik}(q,v) = \Gamma_{ijk}(q)\, v^j = \l X_i, \nabla_{\dot{q}} X_k \r
\label{eq:c_def}
\end{equation}
in the sense that this definition allows us to directly relate properties of the Coriolis matrix with those of its connection. 

Our first result concerns passivity and metric compatibility.

\begin{proposition}[Skew-symmetry and Metric Compatibility] 
\label{prop:MatrixCompatibility}
Consider a Coriolis matrix definition $C(q,v)$ and a connection $\nabla$ related via \eqref{eq:c_def}. The connection is compatible with the metric iff $\dot{H}-2C$ is skew-symmetric pointwise. 
\end{proposition}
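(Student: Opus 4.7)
The plan is to unpack both sides of the metric-compatibility condition on the specific inputs $(\dot{q}, X_i, X_k)$ and see that they coincide exactly with the two sides of the skew-symmetry identity. Concretely, metric compatibility asks that for all smooth vector fields $X,Y,Z$,
\[
\Lie{X}{\l Y, Z \r} = \l \nabla_X Y, Z \r + \l Y, \nabla_X Z \r.
\]
I would first specialize this to $X = \dot{q}$, $Y = X_i$, $Z = X_k$. Because $\l X_i, X_k \r_q = H_{ik}(q)$ is a scalar function on $\Q$, the left side becomes $\Lie{\dot{q}}{H_{ik}}$, which along a trajectory of the system is just $\dot{H}_{ik}$. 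The right side, using the defining identity \eqref{eq:c_def}, becomes $C_{ki}(q,v) + C_{ik}(q,v)$. Hence the basis-pair version of metric compatibility is exactly the componentwise statement $\dot{H}_{ik} = C_{ik} + C_{ki}$, i.e.\ $\dot{H} = C + C\T$, which is equivalent to $\dot{H} - 2C$ being skew-symmetric.

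The $(\Leftarrow)$ direction is the one requiring slightly more care, since metric compatibility is a condition on \emph{all} smooth vector fields, not just the basis $\{X_i\}$ and the tangent $\dot{q}$. I would extend the basis identity to arbitrary fields by a standard bilinearity/Leibniz argument: any $X = \alpha^j X_j$, $Y = \beta^i X_i$, $Z = \gamma^k X_k$ can be expanded, and both sides of the metric-compatibility identity decompose according to the same rules, namely property A1 (tensoriality in the first slot of $\nabla$) and property A2 (the Leibniz rule in the second slot), together with the Leibniz rule for the Lie derivative of a product. After the expansion, every term that contains a derivative of one of $\alpha, \beta, \gamma$ cancels between the two sides by bookkeeping, leaving only the scalar multiples of the basis identity $\Lie{X_j}{H_{ik}} = \l \nabla_{X_j} X_i, X_k \r + \l X_i, \nabla_{X_j} X_k \r$ — which is precisely the pointwise form of $\dot{H}=C+C\T$ applied with $v = e_j$ (canonical unit velocity along $X_j$). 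Since the skew-symmetry hypothesis must hold for every $(q,v)$, in particular for each of these unit velocities, the basis identity is in hand, and metric compatibility follows.

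The main (minor) obstacle will be carefully translating the condition "$\dot{H}-2C$ is skew-symmetric pointwise" (stated in terms of a time derivative along trajectories) into the coordinate-free statement $\Lie{X_j}{H_{ik}} = \l \nabla_{X_j} X_i, X_k \r + \l X_i, \nabla_{X_j} X_k \r$ at each $q$, which is what is needed for the Leibniz extension. This is just the observation that $\dot{H}_{ik}\big|_{\dot q = X_j v^j} = v^j\,\Lie{X_j}{H_{ik}}$ together with $C_{ik} = v^j\,\l X_i, \nabla_{X_j} X_k\r$, so the pointwise matrix identity in $(q,v)$ is equivalent to the basis identity for every $j$ separately. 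Once that translation is recorded, the forward direction is a one-line substitution and the converse is the bilinearity/Leibniz extension described above.
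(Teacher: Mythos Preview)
Your proposal is correct and follows essentially the same approach as the paper: the forward direction is the direct substitution $\dot{H}_{ik} = \l \nabla_{\dot q} X_i, X_k\r + \l X_i, \nabla_{\dot q} X_k\r = C_{ki}+C_{ik}$, and the converse is the bilinearity/Leibniz extension from the basis identity (obtained by evaluating the pointwise hypothesis at unit velocities) to arbitrary vector fields. The paper carries out exactly this program, writing $X = X_i x^i$, $Y = X_j y^j$, $Z = X_k z^k$ and checking that the derivative-of-coefficient terms cancel, just as you anticipated.
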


\begin{proof} We consider the forward implication via computation:
    \begin{align}
    &\phantom{=} \dot{H}_{ij} - C_{ij}-C_{ji} \label{eq:forward_passive} \\ & = \Lie{X_k}{ \l X_i, X_j \r } \, v^k - \l X_i ,\nabla_{X_k} X_j \r v^k - \l X_j , \nabla_{X_k} X_i \r v^k \nonumber \\
    &= \left ( \l \nabla_{X_k} X_i, X_j \r + \l X_i, \nabla_{X_k} X_j\r \right) v^k \nonumber \\
    & ~~~~~~ - \left(\l X_i ,\nabla_{X_k} X_j \r  + \l X_j , \nabla_{X_k} X_i \r \nonumber \right) v^k =0
\end{align}
where we have used the metric compatibility when going from the second to the third line.

For the reverse direction,  the second line of \eqref{eq:forward_passive} is assumed zero, which gives that $\Lie{X_i}\l X_j, X_k \r = \l \nabla_{X_i} X_j , X_k \r + \l X_j, \nabla_{X_i} X_k \r$. To show metric compatibility, we must show that the same type of property holds on arbitrary vector fields. We consider the triple of vector fields $X,Y,Z \in \X(\Q)$ expressed as $X=X_i x^i$, $Y=X_i y^i$, $Z=X_i z^i$. Via using properties of the Lie derivative, we have
\renewcommand{\myscale}{.96}
\begin{align*}
\scaleobj{\myscale}{\Lie{X}{ \l Y , Z \r}} & \scaleobj{\myscale}{= \Lie{X_i x^i}{\l X_j y^j, X_k z^k \r}} 
\\ &\scaleobj{\myscale}{= x^i ( \Lie{X_i}{y^j z^k} ) \l X_j, X_k \r + x^i y^j z^k \Lie{X_i}{\l X_j , X_k \r}}
\end{align*}
We use the second line of \eqref{eq:forward_passive} to carry out the Lie derivative $\Lie{X_i}\l X_j, X_k \r$, which ultimately leads to:
\begin{align*}
\nabla_X \l Y , Z \r &= x^i z^k \l \nabla_{X_i} X_j y^j, X_k \r + x^i y^j \l X_j, \nabla_{X_i} X_k z^k \r \\
&= \l \nabla_X Y , Z \r + \l Y, \nabla_X Z \r
\end{align*}
This proves the reverse implication.
\end{proof}    

\begin{remark}
The forward implication was discussed in \cite[pg. 94]{van2017} when working with coordinate vector fields. 
\end{remark}

Our second result considers whether the definition of the Coriolis matrix gives the correct equations of motion:
    
\begin{proposition} 
\label{prop:MatrixGeodesics}
Consider a Coriolis matrix definition $C(q,v)$ and a connection $\nabla$ related by \eqref{eq:c_def}. The connection gives the same geodesics as the Riemannian connection iff the Coriolis matrix satisfies $C v = C^\star v$, where $C^\star$ is the factorization corresponding to the Riemannian connection.
\end{proposition}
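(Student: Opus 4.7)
The plan is to derive the geodesic equation for an arbitrary connection $\nabla$ in the chosen basis $X_1,\dots,X_n$ and show it takes the form $H\dot v+Cv=0$. Since the Riemannian connection yields $H\dot v+C^\star v=0$ in the same coordinates, and since the mass matrix $H$ depends only on the metric $\G$, the equivalence of the two geodesic ODEs collapses immediately to the pointwise condition $Cv=C^\star v$.

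To carry out the derivation, I would write $\dot q=X_k v^k$ along a curve and expand, using the extensions of A1--A2 to covariant derivatives of vector fields along curves,
\[
\nabla_{\dot q}\dot q=\dot v^{k}X_{k}+v^{k}\nabla_{\dot q}X_{k}=\dot v^{k}X_{k}+v^{j}v^{k}\nabla_{X_{j}}X_{k}.
\]
Taking the inner product of both sides with each $X_i$ and using $H_{ik}=\l X_i,X_k \r$ together with the definition \eqref{eq:c_def} yields
\[
\l X_i,\nabla_{\dot q}\dot q \r=H_{ik}\dot v^{k}+\Gamma_{ijk}v^{j}v^{k}=H_{ik}\dot v^{k}+C_{ik}v^{k}.
\]
Since the $X_i$ span $T_{q}\Q$ at every point, $\nabla_{\dot q}\dot q=0$ is equivalent to $H\dot v+Cv=0$.

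To conclude, I apply this identity to both $\nabla$ and $\nabla^\star$, producing
\[
H\dot v+Cv=0 \qquad \text{and} \qquad H\dot v+C^\star v=0
\]
as the respective geodesic ODEs, completed by the connection-independent kinematic relation $\dot q=X_i v^i$. Since $H$ is invertible and independent of the chosen connection, these two first-order systems on $T\Q$ define the same flow---and hence the same geodesic through every initial condition $(q_0,v_0)$, by uniqueness of solutions---if and only if $C(q,v)v=C^\star(q,v)v$ identically.

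The main obstacle I expect is the careful derivation of $\nabla_{\dot q}\dot q=\dot v^k X_k+v^j v^k\nabla_{X_j}X_k$, which requires extending A1--A2 to vector fields defined only along a curve and handling the fact that $v^k(t)$ is a time-dependent scalar along the curve rather than a smooth function on $\Q$. As a consistency check, one may verify the claim purely algebraically via the Geodesic Agreement Proposition: componentwise, $(Cv-C^\star v)_i=(\Gamma_{ijk}-\Gamma^\star_{ijk})v^j v^k=-D_{ijk}v^j v^k$, which vanishes for all $v$ iff the symmetric part of $D_{ijk}$ in the $(j,k)$ indices vanishes, i.e., iff $D$ is anti-symmetric in its second and third arguments---precisely the condition already shown to characterize geodesic agreement.
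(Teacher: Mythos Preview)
Your proof is correct and follows essentially the same approach as the paper: both reduce the question to comparing the geodesic ODEs $H\dot v + Cv = 0$ and $H\dot v + C^\star v = 0$ and conclude via invertibility of $H$. The paper's one-paragraph justification is terser, routing through Hamel's equations to identify the Riemannian geodesic equation with the force-free dynamics, whereas you compute $\l X_i, \nabla_{\dot q}\dot q\r$ directly; your closing consistency check via the contorsion tensor is also a valid standalone argument.
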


A connection giving the same geodesics as the Riemmannian connection is equivalent to the associated factorization giving the correct dynamics. This follows since reoganizing Hamel's equations \eqref{eq:hamel} via the Koszul formula \eqref{eq:gen_chris} provides:
\[
\frac{{\rm d}}{{\rm d}t} \frac{\partial L}{\partial v^i} + s_{\addspace{k}ij}^k \frac{\partial L}{\partial v^k} v^j = H_{ij} \dot{v}^j + \Gamma^\star_{ijk} v^j v^k = (H \dot{v} + C^\star v)_i
\]

\begin{corollary} Any Coriolis factorization that satisfies the skew-symmetry property and gives the correct dynamics is derived from a connection whose coefficients differ from the Riemannian connection by a totally anti-symmetric contorsion tensor. There is a unique such factorization when $n=1,2$, but an infinite number of such factorizations when $n\ge3$.
\end{corollary}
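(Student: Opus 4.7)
The plan is to recognize that this corollary is essentially a bookkeeping assembly of the two propositions (Props.~\ref{prop:MatrixCompatibility} and~\ref{prop:MatrixGeodesics}) and two earlier corollaries (Cors.~\ref{cor:antisymmetric} and~\ref{corr:InfinityOfConnections}) that immediately precede it, once one establishes a bijection between Coriolis factorizations (linear in $v$) and affine connections on $\Q$. Accordingly, the proof should introduce no new machinery; the work is just in arranging the pieces so each hypothesis of the corollary maps cleanly onto a condition on the underlying connection.

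First I would make the correspondence precise. Any Coriolis matrix $C(q,v)$ that is linear in $v$ can be written uniquely, at each $q$, as $C_{ik}(q,v) = \Gamma_{ijk}(q)\, v^j$, and by definition \eqref{eq:c_def} the resulting coefficients $\Gamma_{ijk}$ are exactly the connection coefficients of a unique affine connection $\nabla$ (with respect to the fixed basis $X_1,\ldots,X_n$). Conversely, every choice of connection coefficients yields such a $C$ via \eqref{eq:c_def}. Thus specifying a Coriolis factorization linear in $v$ is the same data as specifying an affine connection on $\Q$.

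Next I would invoke the two translation results. Proposition~\ref{prop:MatrixCompatibility} says that the skew-symmetry property $\dot H = C + C^\top$ is equivalent to $\nabla$ being metric-compatible, and Proposition~\ref{prop:MatrixGeodesics} says that the condition $C v = C^\star v$ (i.e., that $C$ reproduces the Coriolis vector of the true equations of motion \eqref{eq:eom}) is equivalent to $\nabla$ sharing its geodesics with the Riemannian connection $\nabla^\star$. Hence the two hypotheses of the corollary translate exactly into the two hypotheses of Cor.~\ref{cor:antisymmetric}, which in turn characterizes admissible connections as precisely those whose contorsion tensor $D_{ijk} = \Gamma^\star_{ijk} - \Gamma_{ijk}$ is totally anti-symmetric. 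This settles the first sentence.

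For the counting claim, I would simply transport Cor.~\ref{corr:InfinityOfConnections} across the bijection: since the space of totally anti-symmetric $(0,3)$ tensor fields on $\Q$ is isomorphic to the space of smooth $3$-forms, it is trivial when $n=1,2$ -- forcing $D=0$ and hence $C = C^\star$ -- and is an infinite-dimensional function space when $n \ge 3$, each element producing a distinct admissible $C$. The only mild obstacle is the bookkeeping in the bijection step: one must check that the $v$-dependence in \eqref{eq:c_def} enters only through the $\dot q$ slot of $\nabla_{\dot q}X_k$ and no extra freedom is smuggled in beyond what the connection coefficients already parameterize. Once that is verified, every subsequent step is a direct citation.
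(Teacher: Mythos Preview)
Your proposal is correct and matches the paper's own proof, which simply states that the result follows from combining Corollary~\ref{corr:InfinityOfConnections} with Propositions~\ref{prop:MatrixCompatibility} and~\ref{prop:MatrixGeodesics}. Your version is slightly more explicit in spelling out the bijection between factorizations linear in $v$ and connection coefficients via \eqref{eq:c_def}, and in routing through Corollary~\ref{cor:antisymmetric}, but this is the same argument with a bit more detail shown.
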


\begin{proof}
The result follows from combining Corollary \ref{corr:InfinityOfConnections} with Propositions \ref{prop:MatrixCompatibility} and \ref{prop:MatrixGeodesics}.
\end{proof}

\ifincludeconstruction
The previous corollary gives us a way to construct all possible factorizations satisfying the skew-symmetry property \eqref{eq:skew} and providing the correct dynamics.
In dimension $n=3$, given one such factorization $C_{\rm old}(q,v)$, we can construct another according to:
\[
C_{\rm new}(q,v) = C_{\rm old}(q,v) + \beta(q) (v\times)
\]
where $\beta:\Q \rightarrow \R$. More generally, when $n>3$ we can consider defining a selector matrix
\[
S_{ijk} = \begin{bmatrix} e_i & e_j& e_k\end{bmatrix}\T\,,
\]
where $e_i, e_j, e_k \in \R^n$ are the $i$-th, $j$-th, and $k$-th coordinate unit vectors, respectively. We can then generalize the above construction according to:
\begin{equation}
C_{\rm new}(q,v) = C_{\rm old}(q,v) + \sum_{i<j<k} \beta_{ijk}(q) S_{ijk}\T ( S_{ijk} v \times) S_{ijk}
\label{eq:jared}
\end{equation}
where each $\beta_{ijk}:\Q \rightarrow \R$ can be chosen independently. 
\fi

\begin{remark}
Coriolis factorizations also play a role in describing the evolution of the generalized momentum within the Hamiltonian formalism. When adopting generalized coordinates, setting $p_i = \frac{\partial L}{\partial \dot{q}^i}$ we have:
\[
\dot{p}_i = \frac{\partial L}{\partial q^i} + \tau_i
\] 
More generally, however, when adopting generalized speeds and assigning $p = H(q) v$, these equations take the form:
\[
\dot{p} = \tau + C(q,v)\T v - g(q)
\]
where $C(q,v)$ is any Coriolis factorization satisfying the skew symmetry property \eqref{eq:skew}.
\end{remark}

\subsection{Interpretation in Passivity-Based Control}
\subsubsection{Setup} 

In this subsection, we consider the use of these Coriolis matrices in passivity-based control, and explore the role that the torsion-free property has on the resulting trajectories. To do so, we employ the tracking control component of \cite{slotine1987adaptive}, which applies to the case of working with generalized coordinates. 
Given a desired trajectory $q_d(t)$, we consider the error with coordinates $e^i = q^i - q_{d}^i$ and corresponding vector field $e = X_i e^i$. We then consider a time-varying vector field $\dot{q}_r(q, t) = X_i(q) \,\dot{q}_d^i(t) - \Lambda \, e(q, t)$ where $\Lambda>0$ is a scalar gain that is used to control convergence speed. As before, we use $v\in \mathbb{R}^n$ to represent the components of $\dot{q} \in T_q \Q$, and likewise we use $v_r \in \mathbb{R}^n$ to represent the components of $\dot{q}_r \in T_q Q$.


Following \cite{slotine1987adaptive}, we form the sliding variable:
\[
s = v-v_r
\]
where the task of tracking the desired trajectory $q_d(t)$ can now be achieved via regulating $s\rightarrow 0$.

This definition leads to a certainty-equivalence control law
\begin{equation}
\tau = \hat{H}(q) \dot{v}_r + \hat{C}(q,v) v_r + \hat{g}(q) - K_D s
\label{eq:certainty_equiv_law}
\end{equation}
where $\hat{H}$, $\hat{C}$, and $\hat{g}$ represent estimates of the mass matrix, Coriolis matrix, and gravity terms, and $K_D \succ 0$ is a positive-definite user-chosen gain matrix. We consider estimates for the classical inertial parameters of the mechanism \cite{atkeson1986estimation} as $\hat{\theta} \in \mathbb{R}^{10 N_B}$ for a system of $N_B$ rigid bodies. This allows expressing the control law \eqref{eq:certainty_equiv_law} above as:
\begin{equation}
\label{eq:certainty}
\tau = Y(q,v, v_r, \dot{v}_r) \hat{\theta} - K_D s
\end{equation}
where $Y(q,v, v_r, \dot{v}_r) \in \mathbb{R}^{n \times (10 N_B)}$ is often referred to as the Slotine-Li regressor matrix \cite{yuan1995recursive,garofalo2013closed}. 

Considering the closed-loop behavior under this control law, the sliding variable then evolves as \cite{slotine1987adaptive}
\[
H(q) \dot{s} + (C(q,\dot{q})+K_D) s = Y(q,v, v_r, \dot{v}_r) \tilde{\theta}
\]
where $\tilde{\theta} = \hat{\theta}-\theta$ gives the parameter error. To study the regulation of $s\rightarrow 0$, we consider the Lyapunov candidate
\[
V = \frac{1}{2} s\T H(q) s + \frac{1}{2} \tilde{\theta}\T \tilde{\theta}
\]
which, when using {\em any} factorization satisfying $\dot{H} = C+C\T$, can be shown to have rate of change:
\begin{equation}
\dot{V} = - s\T K_D s + s\T Y \tilde{\theta} + \tilde{\theta}\T \dot{\tilde{\theta}} \label{eq:Vdot}
\end{equation}
When parameter estimates are fixed and the parameter error $\tilde{\theta} \equiv 0$, we have convergence of $s\rightarrow 0$ as $t \rightarrow \infty$. When $\tilde{\theta}$ is non-zero, we can adapt the estimate \cite{slotine1987adaptive} according to
\begin{equation}
\dot{\hat{\theta}} = - Y^T s\,,
\end{equation}
which again recovers $\dot{V} = - s\T K_D s$. For the purposes of the following example, we ignore parameter adaptation and leave a persistent parameter error to see how the controller performs with modeling inaccuracy.


\begin{remark}
The above construction created the reference velocity field $\dot{q}_r(q,t)$ to provide convergence to $q_d(t)$. However, one can consider generating this reference velocity field in many other ways. For example, it could be constructed to provide convergence to a submanifold, convergence in task-space \cite{niemeyer1991performance,slotine1991general}, or to a desired limit cycle, with $\dot{v}_r \in \mathbb{R}^n$ then defined based on the time differentiation of the coordinates of this vector field along the trajectory.

Alternatively, as discussed in \cite[Example 3.4]{slotine2003modular} the reference dynamics can be specified at the acceleration level via defining desired accelerations $\dot{v}_d(t,q,v)$ directly and then constructing $v_r$ via integrating the dynamics:
\[
\dot{v}_r(t) = -k (v_r(t) - v(t)) + \dot{v}_d(t, q(t), v(t))\,.
\]
In that case, it remains that $s\rightarrow 0$ under \eqref{eq:certainty}. 

\end{remark}


\subsubsection{Example} As motivated in the introduction, we consider the simple example of tracking a trajectory for a point mass in 3D. We estimate the point mass as $\hat{m} = 0.9$ kg with the true mass $m=1$ kg. Generalized coordinates $q=[p_x, p_y,p_z]\T$ are adopted for the positions of the mass in a Euclidean coordinate system. The tracking task is chosen as following a line with varying velocity such that $q_d(t) = [t + \sin(t),~0,~0]\T$. 

In the first case, we use the Coriolis matrix $C=0$ which corresponds to the torsion-free connection, and we use parameters $\Lambda=1$ and $K_D=1$. The resulting trajectories are shown vs. time in Fig.~\ref{fig:state_traj_point_mass}. The results are uneventful, and the trajectory converges to near-zero error. The error does not exactly approach zero due to model mismatch. 

As a second case, we consider the Coriolis factorization $C=-5 (v\times)$, where $(v \times) \in \mathbb{R}^{3\times 3}$ is the skew-symmetric cross-product matrix so that for any other $w\in \mathbb{R}^3$, $(v \times) w = v\times w$. This factorization may feel unreasonable since the Coriolis terms are zero in this case. However, since $H(q) = m \mathbf{1}_{3}$ where $\mathbf{1}_{3}$ the identity matrix, we have $\dot{H} = 0$, and so any skew-symmetric $C$ will satisfy the skew property \eqref{eq:skew}. Using any scalar multiple of the Cartesian cross-product matrix $(v\times)$ (or more generally any scalar configuration-dependent function times the cross-product matrix) therefore satisfies \eqref{eq:factorization} and \eqref{eq:skew}. The resulting trajectory is plotted in Fig.~\ref{fig:state_traj_point_mass_torsion}. The trajectory has higher frequency content than the previous case and thus would be comparatively undesirable for deployment on a physical system. More extreme factorizations (e.g., $-10 (v\times)$) yield trajectories with even higher frequency content.

\begin{figure}
    \centering
    \includegraphics[width=\columnwidth]{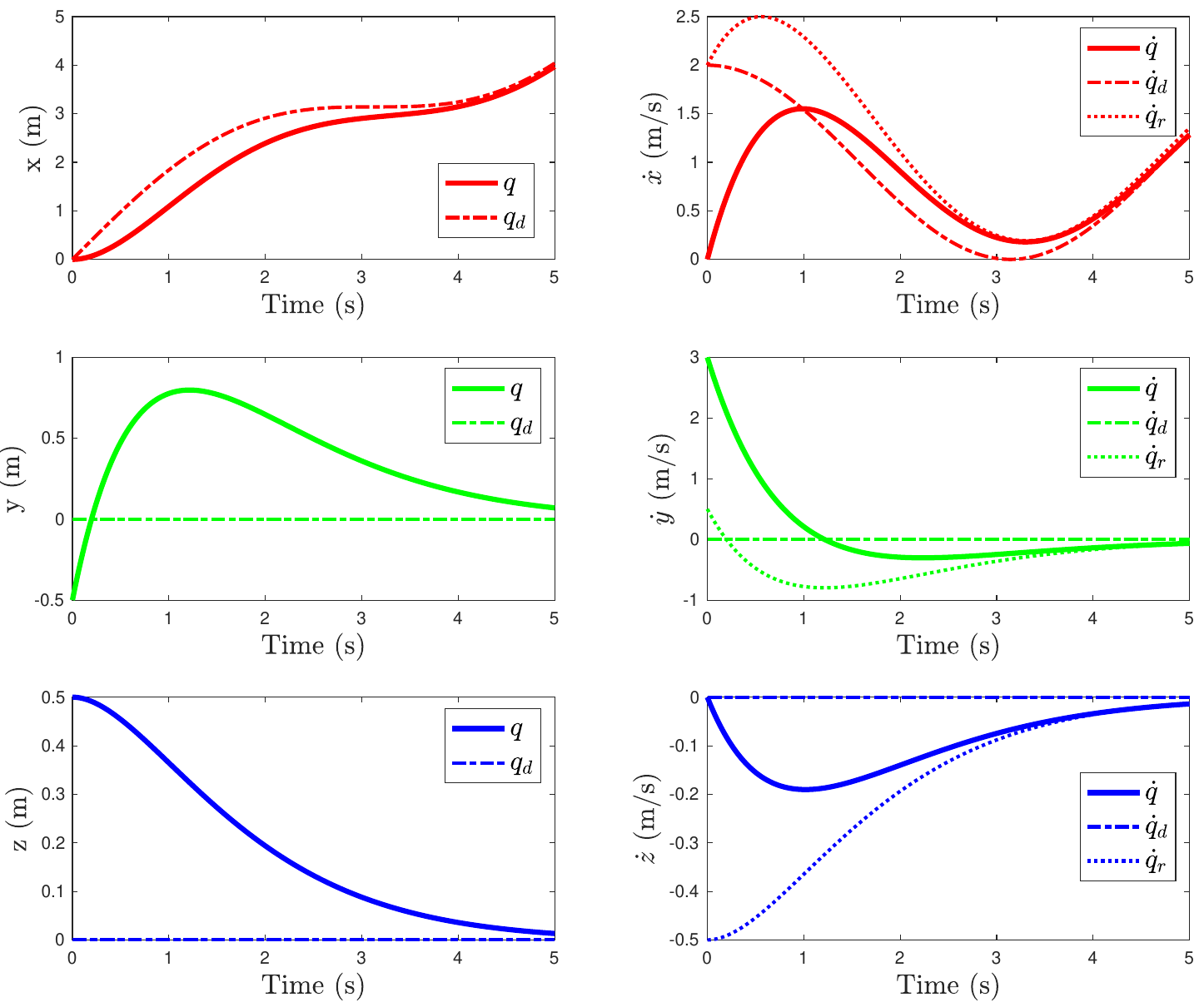}
    \caption{Point mass state trajectories, torsion-free. \later{bigger font}}
    \label{fig:state_traj_point_mass}
\end{figure}

\begin{figure}
    \centering
    \includegraphics[width=\columnwidth]{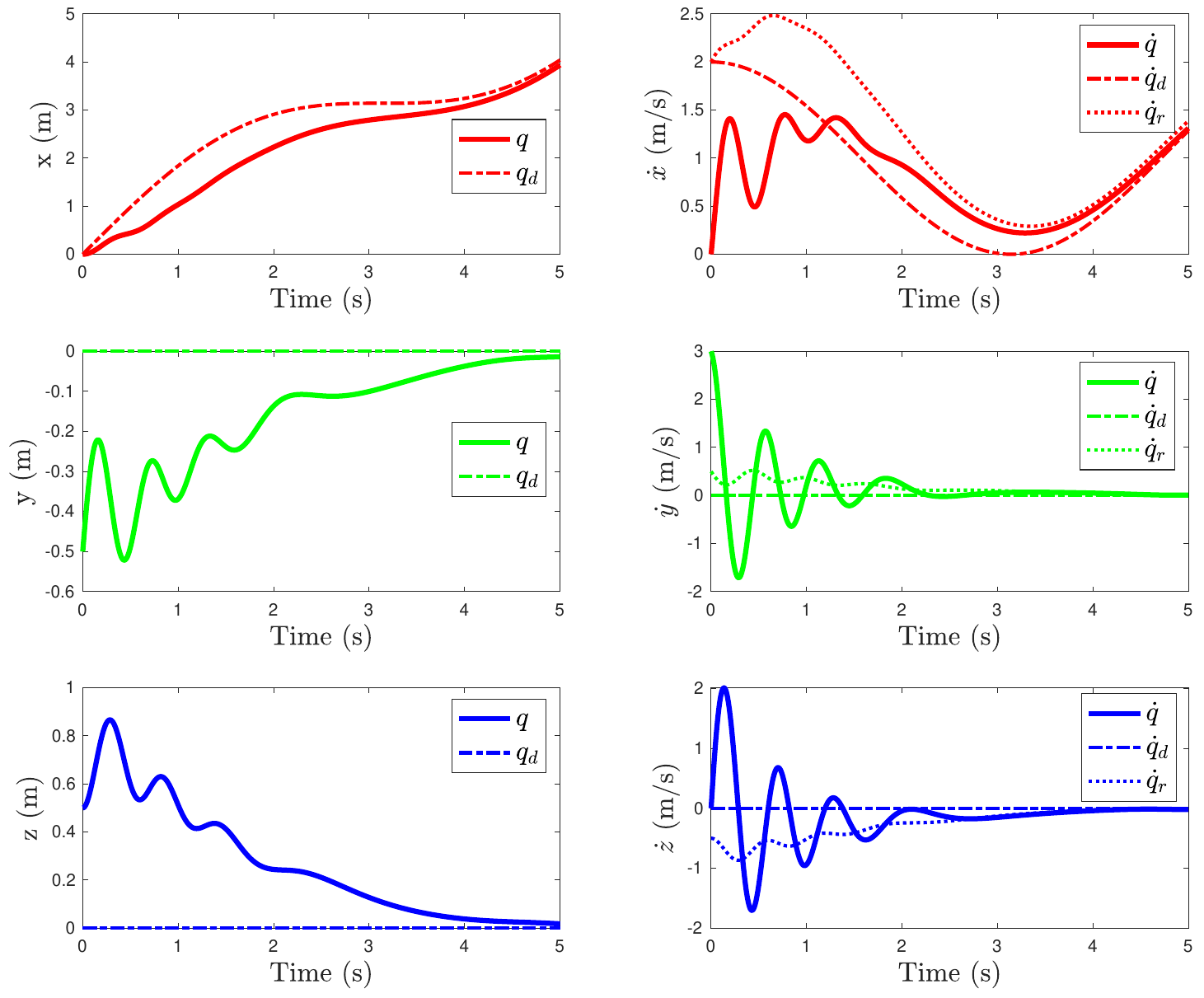}
    \caption{Point mass state trajectories, non-torsion-free. \later{bigger font}}
    \label{fig:state_traj_point_mass_torsion}
\end{figure}

The effects of torsion are better demonstrated in Fig.~\ref{fig:parametric_point_mass}, which shows a parametric trajectory in 3D space. The point mass starts nearest (at $x=0$) and moves away (up and to the right in the figure) as time progresses. At the beginning of the trajectory, the mass has a velocity to the left, while the reference velocity is moving forward. The terms from the Coriolis factorization $-5 (v\times) v_r$ cause a counterproductive upward acceleration, sending the mass further from the target. \later{There is some benefit to the twisting of the trajectory in that the velocity error is rapidly changing direction, and so does not get as much time to accumulate as position error. Indeed, the RMS position error for the case with torsion (XXX) is X\% lower than the torsion-free case (XXX).} 

\begin{figure}
    \centering
    \includegraphics[width=\columnwidth]{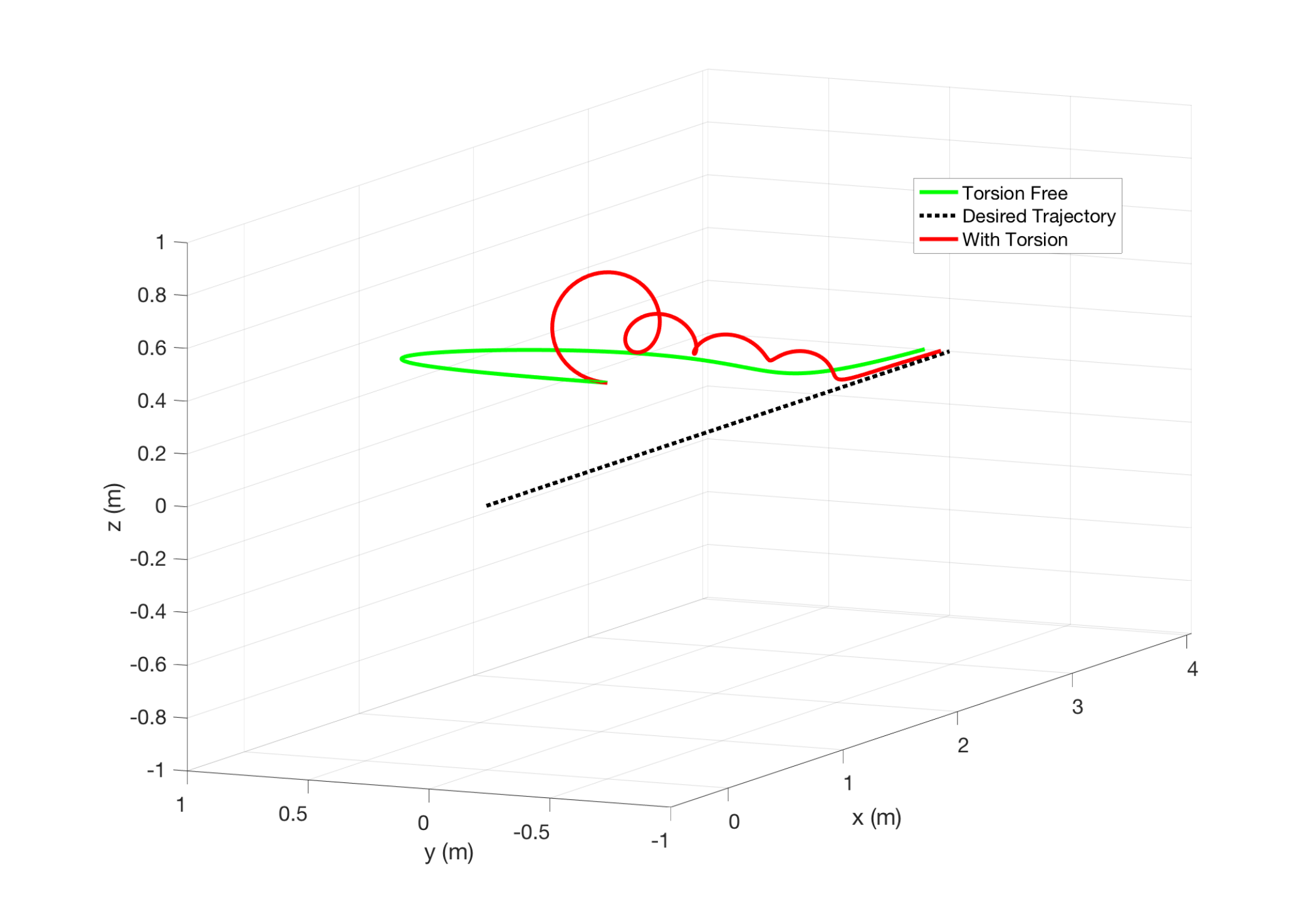}
    \caption{Point mass trajectories, torsion free vs. with torsion. \later{bigger font}}
    \label{fig:parametric_point_mass}
\end{figure}

\begin{figure}
    \centering
    \includegraphics[width=.75 \columnwidth]{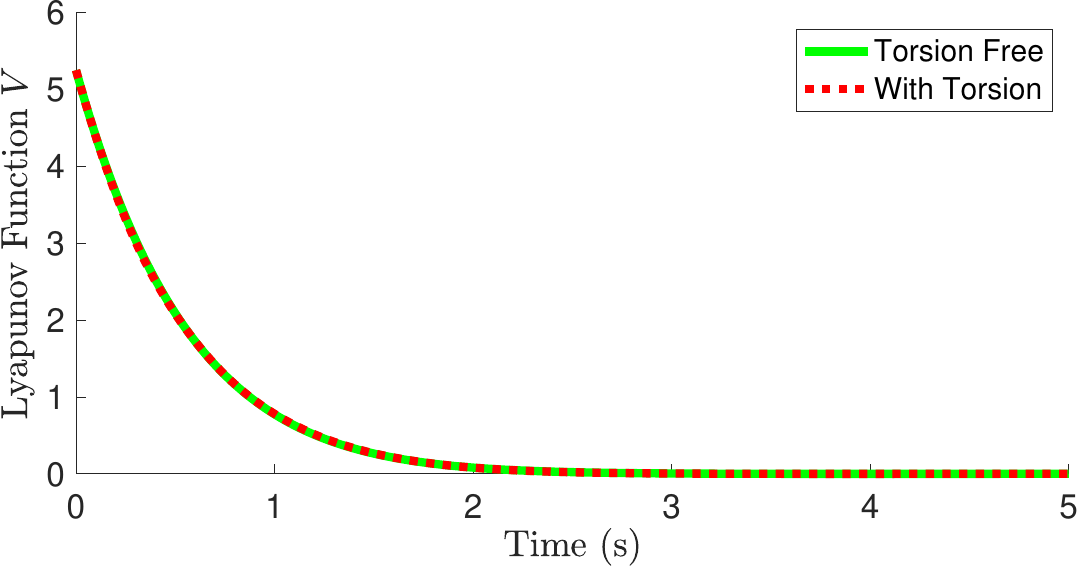}
    \caption{Lyapunov function evolution for point-mass example: torsion-free vs. with torsion. \later{bigger font}}
    \label{fig:Lyap_point_mass}
\end{figure}

This difference in behavior between the two cases is masked by conventional Lyapunov analysis, as \eqref{eq:Vdot} is independent of the choice of $C$. Fig.~\ref{fig:Lyap_point_mass} shows the Lyapunov function evolution for both setups, and in this case, the evolution is nearly identical (the evolution is not exactly identical, which becomes more apparent for larger mismatch in the mass estimate). While this finding may be surprising, it is worth noting that a Lyapunov function value provides a condensed scalar indicator of progress, and, as such, trajectories can ``spin" around level sets of $V$ without changing its value. With this in mind, there is quite a great deal of freedom in how the state trajectory from a controller may evolve, even for the exact same graph of $V$ over time. 

These numerical experiments show the empirical effects of deviating from the Christoffel-consistent Coriolis matrix. Our subsequent section provides the tools that will allow us to construct it for practical systems of interest.

\section{Changes/Reduction of Coordinates}
\label{sec:induce}

This section considers how the definition of the Coriolis matrix can be transformed under a change of basis vector fields, or when the system dynamics are restricted to a submanifold. The motivation for the latter stems from the fact that general mechanical systems are often modeled as interconnected rigid bodies.
For a single rigid body, it is well-understood how one can construct the Riemannian connection \cite{bullo1999tracking} or other metric-compatible connections. Thus, it would be desirable to understand how we may derive a connection for a complex multi-body system from the connections for the bodies themselves, and to understand the properties of connections that result. Again, the opening Fig.~\ref{fig:summary} diagrams the main results as a roadmap for what follows.

\subsection{Induced Metric and Connection}

Consider a submanifold $\Qb \subseteq \Q$ with generalized speeds $\overline{v} \in \R^m$ along some frame vector fields $\overline{X}_1,\ldots,\overline{X}_m \in \X(\Qb)$.  At each $q\in \Qb$ we can write $\overline{X}_j(q) = A^i_j(q) X_i(q)$ for some coefficients $A^i_j(q)$ and arrange these coefficients into a matrix $A(q) \in \mathbb{R}^{n\times m}$ such that $A^i_j$ denotes the $i$-th row and $j$-column of the matrix $A$.
We note that since any tangent vector in $T_q \Qb$ can be written as $X_i v^i = \overline{X}_j \overline{v}^j$, we also have:
\[
\overline{X}_j \overline{v}^j = (A^i_j X_i) \overline{v}^j  = X_i v^i 
\]
and so it follows that $v^i = A^i_j \overline{v}^j$, i.e.,
\begin{equation}
v = A \overline{v}
\label{eq:relate_speeds}
\end{equation}


With these definitions, the induced metric $\Gb$ on $\Qb$ \cite{bullo2004geometric} is represented by a new mass matrix $\overline{H}$ with entries:
\[
\Hb_{ij} = \l \Xb_i , \Xb_j \r = A^k_i H_{k\ell}  A^\ell_j
\]
and, as a result: $\Hb = A\T H A$. 

A natural question then regards how we can achieve something similar for inducing Coriolis factorizations. To do so, we consider first inducing a connection $\nablab$ on $\Qb$ from a connection $\nabla$ on $\Q$ in the following manner. Let us take a curve on $\Qb$ and a vector field tangent to $\Qb$ along it. Then, viewing both as associated with $\Q$, we can use the connection on $\Q$ to take a covariant derivative via $\nabla$ at each point along the curve, and project the result back to the local tangent for $\Qb$. In other words, for any $X,Y \in \X(\Qb)$ we define our induced connection $\nablab$ so that at any $q\in \Qb$:
\[
\nablab_{X} Y(q) = {\rm proj}_{T_q \Qb} \nabla_X Y(q)
\]
where the projection of any vector $W \in T_q \Q$ is given by:
\[
{\rm proj}_{T_q \Qb} W = \argmin_{Z \in T_q \Qb} \l Z-W, Z-W\r_q
\]
Given our extrinsic construction (from Sec.~\ref{sec:background}) for the Riemannian connection, it follows that inducing a connection on $\Qb$ from the Riemannian connection $\nabla^\star$ on $\Q$ gives the Riemannian connection on $\Qb$, i.e., that
\begin{equation}
\nablab_{X}^\star Y(q) = {\rm proj}_{T_q \Qb} \nabla_X^\star Y(q)
\label{eq:induced_reim_connection}
\end{equation}

\subsection{Coriolis Factorizations for Constrained Systems}
We now consider $\Qb$ to be the configuration space of a constrained mechanism, with $\Q$ the configuration manifold when some constraints are freed. 

\begin{theorem}
\label{thm:transform}
Consider any connection $\nabla$ on $\Q$ with associated Coriolis matrix $C$. The Coriolis matrix associated with the induced connection on $\Qb$ is given by 
\begin{equation}
\overline{C} = A\T C A + A\T H \dot{A}
\label{eq:transform_C}
\end{equation}
where $\dot{A}_{j}^i =  \overline{v}^k \Lie{\overline{X}_k}{A^i_{j}} $
\end{theorem}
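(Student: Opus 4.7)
The plan is to compute $\overline{C}$ directly from the defining formula \eqref{eq:c_def} applied on $\overline{\Q}$, namely $\overline{C}_{ik} = \l \overline{X}_i, \overline{\nabla}_{\dot q} \overline{X}_k \r$, and then push the expression back to $\Q$ by expanding $\overline{X}_k = A^\ell_k X_\ell$.

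First I would use the definition $\overline{\nabla}_{X} Y = \mathrm{proj}_{T_q\overline{\Q}} \nabla_X Y$. Since $\overline{X}_i \in T_q \overline{\Q}$, and the projection is orthogonal with respect to $\l\cdot,\cdot\r_q$, the inner product with $\overline{X}_i$ is unaffected by the projection; that is, $\l \overline{X}_i, \overline{\nabla}_{\dot q} \overline{X}_k \r = \l \overline{X}_i, \nabla_{\dot q} \overline{X}_k \r$. This is the step that converts everything into objects living on $\Q$, and it is the only place where the specific definition of the induced connection enters; after that the argument is pure bookkeeping.

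Next I would expand $\overline{X}_k = A^\ell_k X_\ell$ and apply property A2 of an affine connection to get
\[
\nabla_{\dot q}(A^\ell_k X_\ell) = (\Lie{\dot q}{A^\ell_k})\, X_\ell + A^\ell_k \nabla_{\dot q} X_\ell.
\]
Pairing with $\overline{X}_i = A^m_i X_m$ and using bilinearity of the metric yields two terms: $A^m_i H_{m\ell}(\Lie{\dot q}{A^\ell_k})$ and $A^m_i A^\ell_k \l X_m, \nabla_{\dot q} X_\ell\r = A^m_i C_{m\ell} A^\ell_k$. Writing $\dot q = \overline{X}_j \overline{v}^j$ identifies $\Lie{\dot q}{A^\ell_k} = \overline{v}^j \Lie{\overline{X}_j}{A^\ell_k} = \dot A^\ell_k$, which is exactly the quantity defined in the statement.

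Finally I would reassemble the indices into matrix form: the first piece is the $(i,k)$ entry of $A\T H \dot A$, the second is the $(i,k)$ entry of $A\T C A$, so $\overline{C} = A\T C A + A\T H \dot A$. I do not anticipate a serious obstacle — the potentially subtle step is justifying that the projection drops out of the inner product, which is immediate from orthogonality but worth stating explicitly so the reader sees why the result does not depend on the off-$T_q\overline{\Q}$ part of $\nabla_{\dot q}\overline{X}_k$. One should also note that $\dot q$ is tangent to $\overline{\Q}$, so that $\Lie{\dot q}{A^\ell_k}$ can be written using only the frame fields $\overline{X}_j$ on $\overline{\Q}$, making the formula for $\dot A$ intrinsic to $\overline{\Q}$.
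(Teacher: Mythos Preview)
Your proposal is correct and follows essentially the same route as the paper's proof: both use orthogonality of the projection to replace $\overline{\nabla}$ by $\nabla$ inside the inner product with $\overline{X}_i$, then expand $\overline{X}_k = A^\ell_k X_\ell$ via the Leibniz rule (property A2), and identify the two resulting terms as the entries of $A\T H \dot A$ and $A\T C A$. Your added remarks on why the projection drops out and why $\dot A$ is intrinsic to $\overline{\Q}$ are helpful glosses but do not change the argument.
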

\begin{proof}
We first note that the projection for the induced connection means that the difference $\nabla_{\Xb_j} \Xb_k - \nablab_{\Xb_j} \Xb_k$ will be orthogonal to $T_q \Qb$.  As such, for any $i=1,\dots, m$:
\[
\l \Xb_i , \nabla_{\Xb_j} \Xb_k - \nablab_{\Xb_j} \Xb_k \r = 0.
\]
This means that the coefficients for the corresponding induced connection on $\Qb$ are given by $\Gammab_{ijk} = \l \Xb_i, \nabla_{\Xb_j} \Xb_k \r$. As a result, we  have  $\Cb_{ij} = \l \Xb_i, \nabla_{\Xb_k} \Xb_j \r \vb^k = \l \Xb_i, \nabla_{\qd} \Xb_j \r$ as a starting point. Proceeding with computations:
\begin{align*}
\Cb_{ij} &= \l \Xb_i, \nabla_{\qd} \Xb_j \r \\ 
         &= \l A_i^k X_k, \nabla_{\qd} A_j^\ell X_\ell \r \\
         &= \l  A_i^k  X_k, \dot{A}_j^\ell X_\ell + A_j^\ell \nabla_{\qd} X_\ell  \r  \\
         &= \l  A_i^k  X_k, \dot{A}_j^\ell X_\ell + A_j^\ell \nabla_{X_w} X_\ell \,  v^w \r  \\
         &= A_i^k H_{k \ell} \dot{A}_j^\ell + A_i^k \Gamma_{kw\ell} v^w A^\ell_j \\
         &= A_i^k H_{k \ell} \dot{A}_j^\ell + A_i^k C_{k \ell} A^\ell_j
\end{align*}
which implies that $ \ \Cb = A\T C A + A\T H \dot{A}$.
\end{proof}

As the main theoretical results of the paper, we show that the passivity, geodesic, and Riemannian connection properties of the Coriolis matrix on $\Q$ are inherited by the Coriolis matrix definition on $\Qb$ via the transformation law \eqref{eq:transform_C}.

\begin{proposition}
Consider a choice of generalized velocities on $\Q$ and another on $\Qb \subseteq \Q$ related by \eqref{eq:relate_speeds}. Under the  transformation law \eqref{eq:transform_C}, if $\dot{H}-2C$ is skew-symmetric then $\dot{\overline{H}}- 2\overline{C}$ is skew-symmetric.
\end{proposition}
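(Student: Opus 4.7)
The plan is to verify the claim by a direct matrix computation from \eqref{eq:transform_C}. First I would differentiate $\overline{H} = A\T H A$ by the product rule to obtain
\begin{equation*}
\dot{\overline{H}} = \dot{A}\T H A + A\T \dot{H} A + A\T H \dot{A}.
\end{equation*}
Substituting the transformation law $\overline{C} = A\T C A + A\T H \dot{A}$ and collecting terms yields
\begin{equation*}
\dot{\overline{H}} - 2\overline{C} = A\T(\dot{H} - 2C)A + \bigl(\dot{A}\T H A - A\T H \dot{A}\bigr).
\end{equation*}

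Next I would observe that each of the two summands on the right is already skew-symmetric. The bracketed piece is manifestly so: using symmetry of $H$, its transpose equals $A\T H \dot{A} - \dot{A}\T H A$, which is its negative. The piece $A\T(\dot{H}-2C)A$ is a congruence transformation of a skew-symmetric matrix (by hypothesis), and congruences preserve skew-symmetry. A sum of two skew-symmetric matrices is skew-symmetric, so the conclusion follows. I do not expect a real obstacle beyond careful bookkeeping when applying the product rule; the fact that the induced-metric correction $A\T H \dot{A}$ in \eqref{eq:transform_C} is exactly what is needed to preserve the passivity structure is what makes the argument fall out cleanly.

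As a conceptual alternative, one could give a coordinate-free proof via Proposition~\ref{prop:MatrixCompatibility}, which states that the hypothesis is equivalent to the connection $\nabla$ associated with $C$ being compatible with $\G$. Since the induced metric agrees with $\G$ on vectors tangent to $\Qb$, and since the component of $\nabla_X Y$ discarded by the projection defining $\nablab$ is orthogonal to any $Z \in T_q \Qb$, one obtains
\begin{equation*}
\Lie{X}{\l Y, Z \r} = \l \nablab_X Y, Z \r + \l Y, \nablab_X Z \r
\end{equation*}
for all $X, Y, Z \in \X(\Qb)$. A second application of Proposition~\ref{prop:MatrixCompatibility}, now on $\Qb$, then yields the claimed skew-symmetry of $\dot{\overline{H}} - 2\overline{C}$.
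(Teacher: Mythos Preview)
Your direct computation is correct and is essentially the paper's own argument: the paper computes $\dot{\overline{H}} - \overline{C} - \overline{C}^{\top}$ and observes that the cross terms $\dot{A}^{\top} H A$ and $A^{\top} H \dot{A}$ cancel outright, leaving $A^{\top}(\dot{H}-C-C^{\top})A = 0$, whereas you keep those terms as a separate skew piece and argue the sum of two skew matrices is skew. The coordinate-free alternative you sketch via Proposition~\ref{prop:MatrixCompatibility} (metric compatibility of $\nabla$ passes to $\nablab$ because the projected-away component is orthogonal to $T_q\Qb$) is a valid and clean route that the paper does not spell out for this proposition, though it is exactly the geometric picture behind Fig.~\ref{fig:summary}.
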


\begin{proof} By computation.
\begin{align*}
\dot{\overline{H}}- \overline{C}-\overline{C}\,{}\T &= (\dot{A}\T H A + A\T \dot{H} A + A\T H \dot{A}) \\ 
& ~~~ - (A\T C A + A\T H \dot{A}) -  (A\T C\T A + \dot{A}\T H A) \\
&= A\T ( \dot{H} - C-C\T ) A = 0
\end{align*}
\end{proof}

\begin{proposition}
Consider a choice of generalized velocities on $\Q$ and another on $\Qb \subseteq \Q$ related by \eqref{eq:relate_speeds}. Under the transformation law \eqref{eq:transform_C} if $C$ gives the correct dynamics on $\Q$, then $\overline{C}$ does on $\Qb$.
\end{proposition}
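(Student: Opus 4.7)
The plan is to reduce the proposition to Proposition~\ref{prop:MatrixGeodesics}, which characterizes the ``correct dynamics'' hypothesis as the pointwise identity $Cv = C^\star v$. Since the transformation law \eqref{eq:transform_C} is linear in $C$ apart from the additive term $A^\top H\dot{A}$, the difference between any two induced matrices is a conjugation by $A$, which makes the argument essentially a one-line computation once the right comparison object is identified.

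First, I would restate the hypothesis via Prop.~\ref{prop:MatrixGeodesics}: $C$ gives the correct dynamics on $\Q$ is equivalent to $Cv = C^\star v$, where $C^\star$ is the Coriolis matrix associated with the Riemannian connection $\nabla^\star$ on $\Q$. The goal becomes to show $\overline{C}\,\overline{v} = \overline{C}^\star\overline{v}$, where $\overline{C}^\star$ is the Coriolis matrix of the Riemannian connection on $\overline{\Q}$.

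Second—and this is the step that requires the most care—I would apply Thm.~\ref{thm:transform} to \emph{both} $\nabla$ and $\nabla^\star$. The key observation is that \eqref{eq:induced_reim_connection} asserts that the connection on $\overline{\Q}$ induced from $\nabla^\star$ on $\Q$ is precisely the Riemannian connection $\overline{\nabla}^\star$ on $\overline{\Q}$. Consequently, applying the transformation law \eqref{eq:transform_C} to $C^\star$ yields $\overline{C}^\star$ itself:
\begin{equation*}
\overline{C}^\star = A^\top C^\star A + A^\top H \dot{A}.
\end{equation*}
Of course, by hypothesis we also have $\overline{C} = A^\top C A + A^\top H \dot{A}$.

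Third, subtracting these two identities causes the $A^\top H \dot{A}$ term to cancel, leaving $\overline{C} - \overline{C}^\star = A^\top (C - C^\star) A$. Multiplying on the right by $\overline{v}$ and using the relation $v = A\overline{v}$ from \eqref{eq:relate_speeds}, we obtain
\begin{equation*}
(\overline{C} - \overline{C}^\star)\,\overline{v} = A^\top (C - C^\star)\, A \overline{v} = A^\top (C - C^\star)\, v = 0,
\end{equation*}
where the last equality follows from the hypothesis $Cv = C^\star v$. Invoking Prop.~\ref{prop:MatrixGeodesics} in the reverse direction on $\overline{\Q}$ then yields the conclusion that $\overline{C}$ gives the correct dynamics.

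The only obstacle worth flagging is the second step: one must recognize that \eqref{eq:induced_reim_connection} is exactly what is needed to identify the output of Thm.~\ref{thm:transform} (applied to $C^\star$) with the canonical object $\overline{C}^\star$ on $\overline{\Q}$. Once this is noted, the computation is immediate and does not require any analysis of $\dot{A}$ or the structure constants of the $\overline{X}_i$.
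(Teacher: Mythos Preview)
Your proposal is correct and follows essentially the same route as the paper: both arguments use $v=A\overline{v}$ together with the hypothesis $Cv=C^\star v$ to replace $A^\top C A\,\overline{v}$ by $A^\top C^\star A\,\overline{v}$, and both rely on the identification $\overline{C}^\star = A^\top C^\star A + A^\top H\dot{A}$ (which the paper defers to Prop.~\ref{prop:Cstar_transform}, itself proved from \eqref{eq:induced_reim_connection} and Thm.~\ref{thm:transform}, exactly as you invoke). The only cosmetic difference is that you subtract first and then multiply by $\overline{v}$, whereas the paper writes the chain of equalities directly.
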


\begin{proof}
By computation. We use the property that the Coriolis matrix $C^\star$ associated with the Riemannian connection gives the correct dynamics on $\Q$.
\begin{align*}
\overline{C}\, \overline{v} &= (A\T C A + A\T H \dot{A} ) \overline{v}  =(A\T C^\star A + A\T H \dot{A} ) \overline{v} = \overline{C}^\star \, \overline{v}
\end{align*}
where the equivalence of $A\T C^\star A + A\T H \dot{A}$ and $\overline{C}^\star$ is given in the following.
\end{proof}

\begin{proposition}
\label{prop:Cstar_transform}
Consider a choice of generalized velocities on $\Q$ and another on $\Qb \subseteq \Q$ related by \eqref{eq:relate_speeds}. Under the transformation law \eqref{eq:transform_C}, if $C=C^\star$ for $\Q$ then $\overline{C}= \overline{C}^\star$ for $\Qb$.
\end{proposition}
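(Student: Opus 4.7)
The plan is to show that this proposition is a direct synthesis of Theorem~\ref{thm:transform} and equation~\eqref{eq:induced_reim_connection}, rather than requiring fresh computation. The key observation is that Theorem~\ref{thm:transform} already establishes how the Coriolis matrix transforms under the \emph{induced} connection $\overline{\nabla}$ on $\overline{\Q}$ arising from a connection $\nabla$ on $\Q$. Separately, equation~\eqref{eq:induced_reim_connection} asserts that when one induces a connection from the Riemannian connection $\nabla^\star$ on $\Q$, the result is precisely the Riemannian connection $\overline{\nabla}^\star$ on $\overline{\Q}$ associated with the induced metric $\overline{\G}$. Combining these two facts should yield the proposition immediately.

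First, I would apply Theorem~\ref{thm:transform} to the particular connection $\nabla = \nabla^\star$ on $\Q$, whose associated Coriolis matrix is $C = C^\star$ by definition~\eqref{eq:c_def}. The theorem then guarantees that the matrix
\[
\overline{C} = A\T C^\star A + A\T H \dot{A}
\]
is exactly the Coriolis matrix whose entries are $\overline{C}_{ij} = \l \overline{X}_i, \overline{\nabla}_{\dot q} \overline{X}_j \r$, with $\overline{\nabla}$ the induced connection on $\overline{\Q}$.

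Next, I would invoke~\eqref{eq:induced_reim_connection}: because $\nabla = \nabla^\star$ is the Riemannian connection on $\Q$, the induced connection is $\overline{\nabla} = \overline{\nabla}^\star$, the Riemannian connection on $(\overline{\Q}, \overline{\G})$. Consequently, the matrix $\overline{C}$ produced by the transformation law~\eqref{eq:transform_C} has entries $\l \overline{X}_i, \overline{\nabla}^\star_{\dot q} \overline{X}_j \r$, which is precisely the defining formula~\eqref{eq:c_def} for $\overline{C}^\star$ on $\overline{\Q}$. Hence $\overline{C} = \overline{C}^\star$, completing the argument.

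The only subtlety worth double-checking is the first step: that the Coriolis coefficients of $\overline{\nabla}$ depend only on $\l \overline{X}_i, \nabla_{\overline{X}_j} \overline{X}_k \r$ and not on the post-projection component. This is already handled inside the proof of Theorem~\ref{thm:transform}, where the normal component of $\nabla_{\overline X_j} \overline X_k - \overline{\nabla}_{\overline X_j} \overline X_k$ is eliminated by pairing against $\overline X_i \in T_q\overline{\Q}$. So no further work is required beyond citing Theorem~\ref{thm:transform} and equation~\eqref{eq:induced_reim_connection}, and the proof is essentially a one-line chain of implications.
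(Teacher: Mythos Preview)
Your proposal is correct and matches the paper's own proof exactly: the paper simply states that the result follows from \eqref{eq:induced_reim_connection} and Theorem~\ref{thm:transform}. Your write-up is just a more detailed unpacking of that same one-line argument.
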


\begin{proof}
The result follows from \eqref{eq:induced_reim_connection} and Thm.~\ref{thm:transform}.
\end{proof}

\begin{remark}
A special case of Prop.~\ref{prop:Cstar_transform} was proved in \cite[Thm.~1]{echeandia2021numerical} for a change of generalized coordinates.
\end{remark}

\ifremarktaskspace
\begin{remark}
    It is interesting to note that, within passivity-based task-space control \cite{englsberger2020mptc}, the task-space Coriolis matrix can be constructed from a transformation law with the same structural form as \eqref{eq:transform_C}. Given a task-space Jacobian $J$, one can construct $A = H^{-1} J\T \varLambda$ where $\varLambda = (J H^{-1} J\T )^{-1}$ is the task-space inertia matrix. Then, the task-space inertia matrix is equivalently given by $\varLambda = A\T H A$, while the Coriolis matrix given by $C_{\rm task} = A\T C A + A\T H \dot{A}$ matches (with some algebra) the one given in \cite[Eq.~11]{englsberger2020mptc}. Therein, it is shown that if $\dot{H}-2C$ is skew-symmetric, then $\dot{\varLambda} - 2 C_{\rm task}$ is skew-symmetric as well. This observation suggests properties of the transformation equation in \eqref{eq:transform_C} that may extend beyond induced connections on submanifolds. 
\end{remark}
\fi 

\section{Computation}
\label{sec:computation}

Given a complex mechanical system, the previous results say little about how to efficiently compute Coriolis matrices with the desired properties. In this section, we consider the case of a system of rigid bodies connected by joints. In Section \ref{sec:comp_via_jaco}, we take a maximal coordinates approach and employ the results of the previous section to map a Coriolis matrix in the maximal choice of generalized speeds down to one for a minimal choice. In Sec.~\ref{sec:recursive}, we consider the special case of a tree-structure system where this mapping gives rise to an efficient algorithm. Sec.~\ref{sec:closed-chain} then discusses how systems with local loop closures (e.g., from actuation submechanisms) can likewise make use of this 
\ifincludechristoffel
algorithm, and how we can derive from it a method for directly computing the Christoffel symbols. 
\else
algorithm.
\fi
\ifincludeappendices
Beyond the Coriolis matrix, these methods can be used to compute many different quantities in passivity-based adaptive control, which are offered in Appendix~\ref{app:adaptive}.
\fi
Finally, we discuss how Coriolis matrices can be computed via differentiation of the Coriolis/centripetal terms in Sec.~\ref{sec:diff}.

\subsection{Computation Via Jacobians}
\label{sec:comp_via_jaco}

\renewcommand{\v}{\mathbf{v}}
\renewcommand{\u}{\mathbf{u}}
\newcommand{\I}{\mathbf{I}}
\newcommand{\crm}{\times}
\newcommand{\crf}{\times^\#}
\newcommand{\icrf}{\,\overline{\!\times\!}{}^\#\,}
\newcommand{\J}{\mathbf{J}}
\renewcommand{\XM}[2]{ {}^{#1} \mathbf{X}_{#2}}
\renewcommand{\bPhi}{\boldsymbol{\Phi}}
\newcommand{\Cbody}{\mathbf{C}}
\renewcommand{\X}{\mathbf{X}}
\newcommand{\B}{\mathbf{B}}
\renewcommand{\vv}{\boldsymbol{v}}
\newcommand{\oomega}{\boldsymbol{\omega}}
\newcommand{\Tmat}{\mathbf{T}}
\newcommand{\eeta}{\boldsymbol{\eta}}
\renewcommand{\vh}{\boldsymbol{h}}
\renewcommand{\vJ}{\boldsymbol{J}}
\newcommand{\Ibar}{\,\overline{\!\boldsymbol{I}}}
\newcommand{\vSigma}{\bm{\mathit{\Sigma}}}

\newcommand{\gv}{\mathsf{v}}
\newcommand{\gu}{\mathsf{u}}
\newcommand{\ga}{\mathsf{a}}

\newcommand{\gI}{\mathsf{I}}
\newcommand{\gcrm}{\times}
\newcommand{\gcrf}{\times^\#}
\newcommand{\gicrf}{\,\overline{\!\times\!}{}^\#\,}
\newcommand{\gJ}{\mathsf{J}}
\newcommand{\gXM}[2]{ {}^{#1} \mathsf{X}_{#2}}
\newcommand{\gbPhi}{\mathsf{\Phi}}
\newcommand{\gCbody}{\mathsf{C}}
\newcommand{\gX}{\mathsf{X}}
\newcommand{\gB}{\mathsf{B}}

\subsubsection{Single Rigid Body} Let us first consider a single rigid body whose configuration manifold $\Q = SE(3)$ is represented by all matrices of the form:
\[
\Tmat = \begin{bmatrix} \boldsymbol{R} & \boldsymbol{p} \\ 0_{3\times 1} & 1 \end{bmatrix}
\]
where $\boldsymbol{R} \in SO(3)$ is a rotation matrix specifying the orientation of a body-fixed frame relative to an earth-fixed frame, and $\boldsymbol{p} \in \R^3$ spots the origin of that body frame relative to the earth frame. In that case, we consider the body twist $\v \in \mathbb{R}^6$ as the vector of generalized speeds where
\[
\v = \begin{bmatrix} \oomega \\ \vv \end{bmatrix}
\]
with $\oomega \in \R^3$ the angular velocity of the body (expressed in the body-fixed frame), and $\vv\in \R^3$ the velocity of the origin of that frame (also expressed in the body frame). We can identify $\v\in\mathbb{R}^6$ with a left-invariant vector field on $SE(3)$ \cite{bullo2004geometric} according to
\[
X(\Tmat) = \Tmat \underbrace{ \begin{bmatrix} (\oomega \times) & \vv \\ 0_{3 \times 1} & 0 \end{bmatrix}}_{ \triangleq \v^\wedge}
\]
where the hat map $^\wedge$ promotes the components of $\v$ to an element in the Lie algebra $\se{3}$. We consider the spatial inertia $\I$ in the body frame \cite{featherstone2014rigid}
\[
\I = \begin{bmatrix} \Ibar & (\vh \times) \\ 
                     (\vh \times)\T & m\bone  \end{bmatrix}
\]
where $\Ibar \in \R^{3\times3}$ is the rotational inertia about the origin of the body frame, $m$ the body mass, and $\vh = m \boldsymbol{c}$ the first mass moment with $\boldsymbol{c}$ the vector to the center of mass of the body \cite{featherstone2014rigid}. We employ this matrix to set a left-invariant metric $\mathbb{G}$, which we form via:
\[
\l \Tmat \v^\wedge, \Tmat \v^\wedge \r_{\Tmat} = \l \v^\wedge, \v^\wedge \r_{\mathbf{1}} = \v\T \,\I\, \v
\]

The Koszul formula can then be used to evaluate the Riemannian connection for this metric. We consider $Y(\Tmat) = \Tmat \u^\wedge$ as another left-invariant vector field for some $\u \in \mathbb{R}^6$. Following the results of \cite[Eqn.~(8)]{bullo1999tracking} and \cite[App.~2.F]{arnol2013mathematical}, 
\[
\nabla^\star_X Y = \Tmat \Big[ \, \overset{\mathfrak{g}}{\nabla}_\v \u \, \Big]^\wedge 
\]
where the map $\overset{\mathfrak{g}}{\nabla} : \mathbb{R}^6 \times \mathbb{R}^6 \rightarrow \mathbb{R}^6$ satisfies
\begin{equation}
\I \, \overset{\mathfrak{g}}{\nabla}_{\v} \u = \frac{1}{2} \left[ \I (\v \crm) + (\v \crf) \I  + (\I \v)\icrf\right] \u 
\label{eq:RiemanninConnectionRigidBody}
\end{equation}
and where
\[
(\v \crm) = \begin{bmatrix} \oomega \times & 0 \\ \vv \times & \oomega \times \end{bmatrix}
\]
is called a spatial-cross product matrix \cite{featherstone2014rigid} or equivalently the (little) adjoint matrix associated with $\v$ \cite{bullo2004geometric}. The quantity $(\v \crf)$ is defined by $ (\v \crf) = - (\v \crm)\T$, and  $(\f \icrf) \in \R^{6 \times 6}$ for $\f \in \mathbb{R}^6$ is the unique matrix such that $(\f \icrf) \v = (\v \crf) \f$. As a result, the quantity 
\begin{equation}
\Cbody = \frac{1}{2} \left[ \I (\v \crm) + (\v \crf) \I  + (\I \v)\icrf\right]
\label{eq:Cbody}
\end{equation}
thus gives the Christoffel-consistent factorization for a single rigid body when using its body twist for generalized speeds. 

\ifincludeexapndedBD
We can construct this matrix from the pieces of $\I$ by first considering its representation as a pseudo-inertia matrix \cite{wensing2017linear} $\boldsymbol{J} \in \R^{4\times 4}$ according to:
\[
\boldsymbol{J} = \begin{bmatrix} \vSigma\phantom{\T} & \vh \\ \vh\T & m \end{bmatrix}
\]
where $\vSigma = \frac{1}{2} {\rm Trace}(\Ibar) \bone - \Ibar$ gives the matrix of second moments of the density distribution for the rigid-body \cite{wensing2017linear}. 
With these definitions, we have:
\begin{equation}
\Cbody = \begin{bmatrix} (\vSigma \oomega)\times & (\vh\times) (\oomega \times) \\
-(\oomega \times) (\vh \times) & m (\oomega \times)
\end{bmatrix}
\label{eq:DMatrix}
\end{equation}
which, notably, doesn't depend on the linear velocity $\vv$ of the body frame origin.
\fi

\subsubsection{System of Rigid Bodies} Now, let us consider a system of $N$ bodies connected with joints imposing relative motion constraints. In maximal coordinates, let us choose generalized speeds $v = [\v_1; \ldots  ; \v_N]$. 

Defining $\Cbody_i = \frac{1}{2} \left[ \I_i (\v_i \crm) + (\v_i \crf) \I_i  + (\I_i \v_i)\icrf\right]$, in this set of maximal coordinates, the mass matrix and Christoffel-consistent Coriolis matrix are given by:
\begin{align*}
H &= {\rm BlockDiag}(\I_1, \ldots, \I_N)\\  C^\star &= {\rm BlockDiag}( \Cbody_1, \ldots, \Cbody_n) 
\end{align*}

Suppose we can choose minimal generalized speeds $\overline{v}$. 
We then consider Jacobians $\J_i$ for each body $i$ such that $\v_i = \J_i \overline{v}$. We can stack these Jacobians into a matrix $A = [\J_1 ; \cdots ; \J_N]$ such that $v = A \overline{v}$. Then, for the constrained mechanism, we immediately have
\begin{equation}
\Cb^\star = A\T C^\star A + A\T H \dot{A}
\label{eq:cstar_mechanical}
\end{equation}
via Proposition \ref{prop:Cstar_transform}.

\newcommand{\vjb}[1]{\overline{\tt v}{}_{{#1}}}
\newcommand{\qjb}[1]{\overline{\tt q}{}_{{#1}}}
\newcommand{\qj}[1]{{\tt q}{}_{{#1}}}

\subsection{Recursive Computation for Open-Chain Systems}
\label{sec:recursive}

We now consider how to efficiently numerically compute \eqref{eq:cstar_mechanical} for a collection of bodies (e.g., a humanoid robot) free of kinematic loops. 

We consider the $N$ bodies connected via a set of $N$ joints. We can choose minimal generalized speeds $\overline{v}$ as a collection of generalized speeds ($\vjb{1}, \ldots, \vjb{N}$) for each joint. Bodies are numbered $1$ through $N$ such that the predecessor $p(i)$ of body $i$ (toward the root) satisfies $p(i) < i$ \cite{featherstone2014rigid}. When a joint $i$ is on the path from body $j$ to the root, we denote that $i\preceq j$.

Velocities of neighboring bodies can then be related by
\[
\v_i = \XM{i}{p(i)} \v_{p(i)} + \bPhi_i(\qjb{i}) \, \vjb{i}
\]
where $\XM{i}{p(i)}$ is the spatial transform that changes frames from $p(i)$ to $i$ (equivalently given as the Adjoint matrix $\XM{i}{p(i)} = {\rm Ad}_{{}^i \Tmat_{p(i)}}$), and where the quantity $\bPhi_i$ describes the free modes for joint $i$ in its local frame \cite{featherstone2014rigid}. In what follows, we will use $\qjb{i}$ to denote the configuration for the $i$-th joint. 

Suppose we have a joint $j$ that appears earlier in the tree (toward the root) before body $i$ (i.e., $j\preceq i$). The non-zero block of columns corresponding to joint $j$ in the body Jacobian $\J_i$ are given by $\XM{i}{j} \bPhi_j$. Collecting these entries for all bodies and joints, we can express all the body velocities together as:
\[
v = \XM{}{} \bPhi \overline{v}
\]
where $\bPhi = {\rm BlockDiag}(\bPhi_1, \ldots , \bPhi_N)$ and $\XM{}{}$ is a block matrix with sparsity pattern depending on the connectivity of the mechanism with block $i,j$ given by $\XM{i}{j}$ when $j \preceq i$ and zero otherwise. For example, if the tree has predecessor array $p=[0, 1, 1 ,3]$, then this matrix would take the form
\[
\XM{}{} = \begin{bmatrix} \XM{1}{1} & 0 & 0 & 0 \\ \XM{2}{1} & \XM{2}{2} & 0 & 0 \\ \XM{3}{1} & 0 & \XM{3}{3} & 0 \\ \XM{4}{1} & 0 & \XM{4}{3} & \XM{4}{4} \end{bmatrix}
\]
As a result of these relationships, we have $A = \XM{}{} \bPhi$. 

We proceed to consider the time rate of change of $A$. The rate of change in a block of $\XM{}{}$ is given by \cite{featherstone2014rigid}:
\[
\frac{{\rm d}}{{\rm d} t} \XM{i}{j}   = \XM{i}{j} (\v_j \times)- (\v_i \times) \XM{i}{j}
\]
such that if we define $(v\times) = {\rm BlockDiag}( \v_1\times, \ldots, \v_N\times )$ we have:
\[
\dot{\XM{}{}} = \XM{}{} (v \crm) - (v\crm) \XM{}{}   
\]
In line with Featherstone's notation \cite{featherstone2014rigid}, we denote $\ddt \bPhi_i = \mathring{\bPhi}_i$ in local coordinates and denote $\dot{\bPhi} = (v \crm) \bPhi + \mathring{\bPhi}$ which gives the rate of change in the joint axes due to their motion inertially ($(v \crm) \bPhi$), as well as in local coordinates $(\mathring{\bPhi})$\footnote{The closed dot can be interpreted as the derivative being taken in the ground coordinates, i.e., $\dot{\bPhi}_i = \XM{i}{0} \left( \ddt  \XM{0}{i} \bPhi_i \right)$.}.  The final result is that we have:
\[
\dot{A} = \dot{\XM{}{}} \bPhi + \XM{}{} \mathring{\bPhi} = \X \dot{\bPhi} - (v\times) \X \bPhi
\]

Applying our transformation result from the previous section:
\begin{align*}
\Cb^\star &= A\T C^\star A + A\T H \dot{A} \\
    &= \bPhi\T \X\T C^\star \X \bPhi + \bPhi\T \X\T H \X \dot{\bPhi} - \bPhi\T \X\T H (v\crm) \X \bPhi
\end{align*}
Then, grouping the first and third terms, and considering $B = C^\star - H (v\crm)$ we have that
\begin{align}
\label{eq:formofCstar}
\Cb^\star &= \bPhi\T \, \X\T\, B \, \X \, \bPhi + \bPhi\T \, \X\T \, H \, \X \, \dot{\bPhi}
\end{align}
The quantity $B$ takes the form $B = {\rm BlockDiag}(\B_1, \ldots, \B_N)$
where, 
\begin{equation}
\B_i = \Cbody_i - \I_i (\v_i \crm) = \frac{1}{2} \left[ (\v_i \crf) \I_i  + (\I_i \v_i)\icrf - \I_i (\v_i \crm)\right]
\label{eq:NS_factorization} 
\end{equation}
Each of these terms gives a body-level factorization of the Coriolis terms that originally appeared in \cite{Niemeyer90,niemeyer1991performance}
\ifincludeexapndedBD
, and which can be further expanded as:
\begin{equation}
\B_i = \begin{bmatrix} (\vSigma \oomega)\!\times - \Ibar (\oomega \times) - (\vh \times)(\vv \times) & \bzero \\ 
(\vh \times \oomega)\! \times - m (\vv \times) & \bzero
\end{bmatrix}
\label{eq:BMatrix}
\end{equation}
\else
.
\fi

Remarkably, the formula for $\Cb^\star$ in \eqref{eq:formofCstar} is equivalent to \cite[Eq.~18]{echeandia2021numerical}, which thus implies that the main algorithm of that paper \cite[Alg.~1]{echeandia2021numerical} will compute $\Cb^\star$ for any open-chain mechanism with generic joint models. This added generality does not affect the $\mathcal{O}(Nd)$ computational complexity of the algorithm, where $d$ is the depth of the kinematic connectivity tree. The theoretical developments in \cite{echeandia2021numerical} had used previous adaptive control results \cite{niemeyer1991performance} to show that the algorithm gave $\Cb^\star$ when working with mechanisms comprised of revolute, prismatic, and helical joints. The more general theory herein justifies that this favorable property of the algorithm generalizes to a considerably larger class of mechanisms (e.g., with spherical joints, universal joints, floating bases, etc.). Implementations of this algorithm are available in MATLAB \cite{wensingweb} and in recent versions of the popular C/C++ package Pinocchio \cite{carpentier2019pinocchio} (see, e.g., discussion at \cite{pinocchio_discussion}). 

Note that this algorithm then also gives a direct method for computing generalized Christoffel symbols according to $\Gammab^\star_{ijk}(q) = [\Cb^\star(q, e_j)]_{ik}$ where $e_j$ is the $j$-th unit vector. As such, the algorithm can be used to evaluate all generalized Christoffel symbols in time $\mathcal{O}(N^2 d)$. This capability would, for example, enable evaluating covariant derivatives of the Riemannian connection for high degree-of-freedom mechanical systems. 
\ifincludechristoffel
We provide a dedicated algorithm to compute the generalized Christoffel symbols $\Gammab^\star_{ijk}$ following the next section that improves this complexity to $\mathcal{O}(N d^2)$. 
\else
\fi 

\begin{remark}
The factorization of Niemeyer and Slotine \cite{Niemeyer90, niemeyer1991performance} in \eqref{eq:NS_factorization} is closely linked with the Coriolis factorization for a single-rigid body when adopting its spatial twist for its generalized speeds. That is, if $\v$ is the body twist in a body frame $b$, and $\overline{\v} = \XM{0}{b} \v$ gives the spatial twist in an earth frame 0, then the Christoffel-consistent factorization in the $\overline{\v}$ coordinates is:
\[
\Cb = \XMT{b}{0} \B \XM{b}{0}
\]
where $\B = \frac{1}{2} \left[ (\v \crf) \I  + (\I \v)\icrf - \I (\v \crm)\right]$. Correspondingly, if we consider two {\bf right}-invariant vector fields 
$X(\Tmat) = \overline{\v}^\wedge \, \Tmat$ and $Y(\Tmat) = \overline{\u}^\wedge \, \Tmat$, it follows that:
\[
\nabla^\star_X Y = \Big[ \overset{\bar{\mathfrak{g}}}{\nabla}_{\overline \v} \, \overline{\u} \Big]^\wedge \, \Tmat
\] 
where, analogous to \eqref{eq:RiemanninConnectionRigidBody}, the map $\overset{\bar{\mathfrak{g}}}{\nabla} : \mathbb{R}^6 \times \mathbb{R}^6 \rightarrow \mathbb{R}^6$ satisfies
\begin{equation}
\overline{\I}  \, \overset{\bar{\mathfrak{g}}}{\nabla}_{\overline \v}\, \overline \u = \frac{1}{2} \left[  (\overline \v \crf) \overline \I  + (\overline \I \overline{\v} )\icrf -\overline \I ( \overline \v \crm) \right] \overline{\u} = \Cb \overline{\u}
\end{equation}
where $\overline{\I} = \XMT{b}{0} \, \I \, \XM{b}{0}$.
\end{remark}

\subsection{Efficient Computation for Closed-Chain Systems}
\label{sec:closed-chain}

For closed-chain systems, a few options can be considered to compute the Coriolis matrix. The first is to adopt a spanning tree for the rigid-body system \cite{featherstone2014rigid} and project its Coriolis matrix to minimal coordinates via \eqref{eq:transform_C}. The second is to employ constraint embedding approaches \cite{jain2012multibody} that group bodies involved in local loop closures into aggregate bodies, so that the topology of these groups becomes a tree. 

\subsubsection{Spanning Tree Method for $C$}

 In this case, one would first employ \cite[Alg.~1]{echeandia2021numerical} to obtain the Coriolis matrix $C_{\rm spanning}$ for a spanning tree of the closed-chain mechanism \cite{featherstone2014rigid}. To project to the minimal coordinates, one would then construct a mapping $v_{\rm spanning} = A(q) \, v_{\rm minimal}$ from a set of minimal velocities $v_{\rm minimal}$ to the spanning tree velocities $v_{\rm spanning}$. Then, one could use the relationship $C_{\rm minimal} = A\T C_{\rm spanning} A + A\T H_{\rm spanning} \dot{A}$ to obtain a Coriolis matrix with desired properties in the minimal coordinates. 

\subsubsection{Constraint Embedding Method for $C$}

An alternative approach is to start with the kinematic connectivity graph of the system and group bodies involved in local loop closures. This process yields tree-structured graphs for many important classes of constraints \cite{jain2012multibody,kumar2020analytical,chignoli2023propagation}. The resulting tree models organize bodies into sets (referred to as "aggregated bodies" in prior work, but called "clusters" here for smoothness of prose), such that the velocity of cluster $k$ is given by 
\cite{chignoli2023propagation}:
\[
\gv_k = \begin{bmatrix} \v_{k_1} \\ \vdots \\ \v_{k_{n_k}} \end{bmatrix}
\]
where $k_1, \ldots, k_{n_k}$ denote the indices of the bodies in the cluster. If cluster $j$ precedes cluster $k$ in the tree, then when the joints in between them are locked, their velocities are related by 
$
\gv_k = \gXM{k}{j} \, \gv_j
$, 
where the matrix $\gXM{k}{j}$ is a block matrix of spatial transforms, with one non-zero transform per block row. The matrix can again be shown to satisfy the property:
\[
\ddt \gXM{k}{j}  = \gXM{k}{j} (\gv_j \times) - (\gv_k \times) \gXM{k}{j} 
\] 
where we define:
\[
(\gv_k\times) = {\rm BlockDiag}\left( (\v_{k_1} \times), \ldots, (\v_{k_{n_k}} \times)  \right)
\]
Further, the velocity of cluster $k$ can be related to its parent via some $\gbPhi_k(\qjb{k})$ \cite{jain2012multibody,chignoli2023propagation} so that:
\[
\gv_k = \gXM{k}{p(k)} \gv_{p(k)} + \gbPhi_k(\qjb{k})\, \vjb{k}
\]
which, again, allows for an analogous development to the previous section such that we write the map from minimal velocities $\overline{v}$ to all rigid body velocities as:
\[
v = \gXM{}{} \gbPhi \overline{v}
\] 
where $\gXM{}{}$ is constructed analogously for clusters as it was for bodies in the previous section, and $\gbPhi = {\rm BlockDiag}(\gbPhi_1, \ldots, \gbPhi_{N_c})$ with $N_c$ the number of clusters. 
\newcommand{\Tr}[1]{^{\!\top_{\!#1}}}

So, again, we have:
\begin{align}
\Cb^\star &= \gbPhi\T \, \gX\T\, B \, \gX \, \gbPhi + \gbPhi\T \, \gX\T \, H \, \gX \, \dot{\gbPhi}
\label{eq:CbStar}
\end{align}
This allows us to immediately generalize many formulas and the main algorithm from \cite{echeandia2021numerical} to the case of working with clusters, which is given in Alg.~\ref{alg:cor}. Overall, this algorithm computes each non-zero block $(i,j)$ of $\Cb$ to as
\newcommand{\mx}[1]{\lceil#1 \rceil}
\[
\Cb_{ij}^\star = \gbPhi_i\T \, \gXM{\mx{ij}}{i}\T \gI_{\mx{ij}}^C \gXM{\mx{ij}}{j} \dot{\gbPhi}_j + \gbPhi_i\T \, \gXM{\mx{ij}}{i}\T \gB_{\mx{ij}}^C \gXM{\mx{ij}}{j} \gbPhi_j
\]
where $\mx{ij} = {\rm max}(i,j)$ when clusters $i$ and $j$ are related (i.e., when one is a descendant of the other in the connectivity tree). The quantities
\begin{align}
\gI_{k}^C &= \sum_{\ell \succeq k} \gXM{\ell}{k}\T \gI_\ell \gXM{\ell}{k} & 
~~~\gB_{k}^C &= \sum_{\ell \succeq k} \gXM{\ell}{k}\T \gB(\gI_\ell, \gv_\ell) \gXM{\ell}{k} 
\end{align}
give the composite inertia and composite Coriolis matrices for the cluster subtree rooted at cluster $k$ where
\[
\gB(\gI, \gv) = \frac{1}{2} \left[ (\gv \crf) \gI  + (\gI \, \gv)\icrf - \gI (\gv \crm)\right] 
\]
gives a generalization of the previous formula for $\vB_i$ when working with cluster quantities.

\newcommand{\gF}{\mathsf{f}}
\newcommand{\gPhiDot}{\dot{\gbPhi}}

\begin{algorithm}[t]
    \caption{Torsion-Free Coriolis Matrix Algorithm}
    \setstretch{1.2}
    \begin{algorithmic}[1]
    \REQUIRE $q,\,\overline{v}$
    \STATE $\gv_0 = \bzero$
    \FOR{$i=1$ to $N$}
    \STATE $\gv_i = \gXM{i}{p(i)}\,\gv_{p(i)}+\gbPhi_i \,\vjb{i}$ 
    \STATE $ \dot{\gbPhi}_i = (\gv_i \times) \gbPhi_i + \mathring{\gbPhi}_i(\qjb{i}, \vjb{i})$ 
    \STATE $\gI_i^C = {\rm diag}(\vI_{i_1}, \ldots, \vI_{i_{n_i}})$ 
    \STATE $\gB_i^C = \frac{1}{2}[ (\gv_i \times^*) \gI_i + (\gI_i \gv_i)\icrf-   \gI_i (\gv_i \times )]$ 
    \ENDFOR
    \FOR{$j=N$ to $1$}
    \STATE $\gF_1 = \gI_j^C\, \gPhiDot_j + \gB_j^C \, \gbPhi_j  $ 
    \STATE $\gF_2 = \gI_j^C\, \gbPhi_j$ 
    \STATE $\gF_3 = (\gB_j^C)\T \gbPhi_j$ 
    \STATE $\overline{C}_{jj} = \gbPhi{}_j\T\, \gF_1$ 
    \STATE $i=j$
    \WHILE{$p(i)>0$}
    \STATE $\gF_1\!=\!\gXM{i}{p(i)}\T\,\gF_1$;\,$\gF_2\!=\!\gXM{i}{p(i)}\T\,\gF_2$;\, $\gF_3\!=\!\gXM{i}{p(i)}\T\,\gF_3$
    \STATE $i = p(i)$ 
    \STATE $\overline{C}_{ij}^\star = \gbPhi{}_i\T\, \gF_1$ 
    \STATE $\overline{C}_{ji}^\star = (\gPhiDot{}_i\T\, \gF_2 + \gbPhi{}_i\T \gF_3 )\T$
    \ENDWHILE
    \STATE $\gI_{p(j)}^C = \gI_{p(j)}^C + \gXM{j}{p(j)}\T \, \gI_j^C \, \gXM{j}{p(j)} $\\[.5ex]
    \STATE $\gB_{p(j)}^C = \gB_{p(j)}^C + \gXM{j}{p(j)}\T \, \gB_j^C \, \gXM{j}{p(j)} $
    \ENDFOR
    \RETURN $\overline{C}^\star$
    \end{algorithmic}
    \label{alg:cor}
    \end{algorithm}

\ifincludechristoffel
\subsubsection{Generalized Christoffel Symbols with Constraint Embedding}
\label{sec:ClusterTree}
In practice, once could run this algorithm with unit vector inputs for the components of $\vb$ to obtain the components of $\Gammab$. Alternatively, we provide an efficient implementation for the generalized Christoffel symbols directly, which is given in Appendix \ref{app:christoffel} and with {\sc Matlab} source available at \cite{wensingweb_Christoffel}. 
\fi

\ifincludeappendices
\ifincludechristoffel
\subsubsection{Algorithms in Support of Adaptive Control}
\fi
More broadly, the structural form of the models for these local loop closures enables many conventional adaptive approaches to extend to this broader class of systems. Indeed, as noted in \cite{jain2012multibody}, many recursive algorithms carry over directly to the case of local loop closures modeling using constraint embedding. 
\ifincludechristoffel
The Christoffel algorithm, however, highlights a situation where this is not the case. 
\fi
Thankfully, this is the case for recursive regressor computation algorithms and many other algorithms for direct and indirect adaptive control. A short justification and supporting pseudocode is provided in Appendix \ref{app:adaptive}.
\fi


\subsection{Computation via Differentiation} 
\label{sec:diff}

In the interest of completeness, we close this section by describing a way to obtain the Christoffel-consistent Coriolis matrix via derivative calculations applied to the Coriolis and centripetal terms. When working with generalized coordinates, one can compute $C^\star$ via:
\[
[C^\star]_{ik} = \frac{1}{2} \frac{\partial [C(q,\dot{q})\dot{q}]_i }{\partial \dot{q}^k}
\]
However, this formula can break down when working with generalized speeds, since the generalized Christoffel symbols are not guaranteed to be symmetric in their (2,3) indices. 
\allowdisplaybreaks 

Indeed, the derivatives of $C(q,v)v$ w.r.t. $v$ give:
\[
\frac{\partial [C(q,v)v]_i}{\partial v^k} = 2 \, \Gamma^\star_{i(jk)} v^j 
\]
As a result, using \eqref{eq:gamma_symmetry}, we have that:
\begin{align}
C_{ik}^\star &= \frac{1}{2} \frac{\partial [C(q,v)v]_i}{\partial v^k} + \frac{1}{2} s_{ijk} v^j \\
             &= \frac{1}{2} \frac{\partial [C(q,v)v]_i}{\partial v^k} + \frac{1}{2} H_{i\ell} s_{jk}^\ell v^j \\
             &= \frac{1}{2} \frac{\partial [C(q,v)v]_i}{\partial v^k} + \frac{1}{2} H_{i\ell} X^\ell \cdot [X_j , X_k] v^j \label{eq:deriv_result}
\end{align}
where $X^1, \ldots , X^n$ gives the basis dual to $X_1, \ldots, X_n$. In this regard, one can construct the Christoffel-consistent Coriolis matrix via derivatives of the dynamics equations when working with generalized coordinates, but such a construction requires a correction factor via the structure constants when working with generalized speeds.

For example, for open-chain robots with joint models satisfying $\mathring{\bPhi}_i=0$, 
we can compute \eqref{eq:deriv_result} as follows. In this case, the structure constants $s^\ell_{jk} = X^\ell \cdot [X_j , X_k]$ are configuration independent and are only nonzero when $\ell$, $j$, and $k$ all belong to the same joint (since the motions of different joints will commute). Considering the columns of  $\bPhi_i$ as a partial basis for $\R^6$, we complete the basis via any complimentary matrix $\bPhi_i^c$ such that $[\bPhi_i ~ \bPhi_i^c] \in \R^{6\times 6}$ is full rank. Defining $[ \boldsymbol{\Psi}_i~\boldsymbol{\Psi}_i^c ] = [\bPhi_i ~ \bPhi_i^c]^{-\top}$ we have that $\boldsymbol{\Psi}_i\T \bPhi_i = \mathbf{1}$ via construction. 

We use these dual vectors $\boldsymbol{\Psi}_i$ to compute \eqref{eq:deriv_result}. 
\newcommand{\vj}[1]{ {\tt v}_{#1}}
Consider local blocks:
\[
{\mathbf K}_i(q,v) =  \mathbf{\Psi}_i\T ( \mathbf{\Phi}_i \vj{i} )\times \mathbf{\Phi}_i 
\]
where, again, $\vj{i}$ is the subvector of $v$ corresponding to the velocity components of joint $i$.
We stack these ${\mathbf{K}}_i$ matrices into a block diagonal matrix $K = {\rm BlockDiag}({\mathbf K}_1, \ldots, {\mathbf K}_N)$, which then enables implementing \eqref{eq:deriv_result} via
\begin{equation}
C^\star(q,v) = \frac{1}{2} \left[ \frac{ \partial C(q,v) v}{\partial v} + H(q) K(q,v) \right]
\label{eq:C_via_diffs}
\end{equation}
This approach might be more straightforward for those who have access to an efficient auto-differentiation package (e.g., such as is part of CasADi  \cite{andersson2019casadi}). A subsequent derivative w.r.t. $v$ can then be used to compute the generalized Christoffel symbols via auto-differentiation if needed.

\ifincludealggeoremark
\begin{remark}
Eq.~\ref{eq:C_via_diffs} also helps to link algorithms for the Coriolis matrix \cite{echeandia2021numerical} with those that calculate sensitivities of the inverse dynamics
\[
{\rm ID}(q,v,\dot{v}) = H(q) \dot{v} + C(q,v)v + g(q)
\]
with respect to state $(q,v)$ (see, e.g., \cite{jain1993linearization,singh2022efficient,bos2022efficient,sun2023analytical,carpentier2019pinocchio}).

Recent results have noted the promise of carrying out these (and other) dynamics computations using insights from geometric algebra \cite{sun2023analytical,sun2023high,low2023geometric}. The detail in \eqref{eq:DMatrix} and \eqref{eq:BMatrix} may provide additional clues, in particular, to further accelerate those computations for sensitivity analysis \cite{sun2023analytical}.
\end{remark}
\fi

\later{
\section{Applications to Passivity-Based Control}

\subsection{Passivity-Based Control of a Manipulator}

\subsection{Passivity-Based Control of a Humanoid}
}

\section{Conclusions}
\label{sec:conclude}

This paper has provided a comprehensive overview of the link between Coriolis factorizations for mechanical systems and affine connections on their configuration manifold. Analysis of the contorsion tensor (measuring the difference from the Riemannian connection) proved that there is a unique factorization satisfying the skew property when $n\le 2$, but an infinity of options when $n\ge3$. Despite this flexibility, the Coriolis matrix associated with the Riemannian connection (and therefore associated with the Christoffel symbols) inherits an additional property of being torsion-free, which we have shown can avoid excess twisting of trajectories in passivity-based control. While the presence of torsion does not affect the Lyapunov argument in this case, it leads to oscillations in the control output that could incite unmodeled high-frequency vibrations in mechanical systems. This motivation led us to address how to compute the Christoffel-consistent Coriolis matrix for constrained mechanical systems, wherein our results on induced factorizations played a central role. For open-chain mechanical systems, this Coriolis matrix can be computed with computational complexity $O(Nd)$, which promotes its use in control strategies for systems such as humanoids and quadruped robots. Extensions demonstrated how these constructions can be applied for computation of the Coriolis matrix and generalized symbols of the Riemannian connection for systems with local loop closures. These algorithms may be considered for other applications beyond passivity-based control (e.g., when taking covariant derivatives in other contexts). 


\section*{Acknowledgement}
The authors gratefully acknowledge Jared DiCarlo for 
\ifincludeconstruction
originally discovering \eqref{eq:jared}. 
\else
observations that originally motivated Corrolary \ref{cor:antisymmetric}. 
\fi
We also thank Gianluca Garofalo and Johannes Englsberger, whose insightful suggestions reinforced the directions we were exploring and motivated us to further develop them. Specifically, we thank Gianluca for originally making us aware of \eqref{eq:RiemanninConnectionRigidBody}. Additionally, we thank Jake Welde for his stimulating discussions on this topic and Justin Carpentier for his kind assistance in updating the Coriolis computations in Pinocchio for consistency with the Christoffel symbols.  
\bibliographystyle{IEEEtran}
\bibliography{affine}

\ifincludeappendices
\appendices

\newcommand{\cF}{\mathcal{F}}
\newcommand{\cB}{\mathcal{B}}
\newcommand{\cA}{\mathcal{A}}

\newcommand{\gw}{\mathsf{w}}
\newcommand{\gW}{\mathsf{W}}
\newcommand{\gV}{\mathsf{V}}

\ifincludechristoffel

\begin{figure}
  \centering
  \includegraphics[width=\columnwidth]{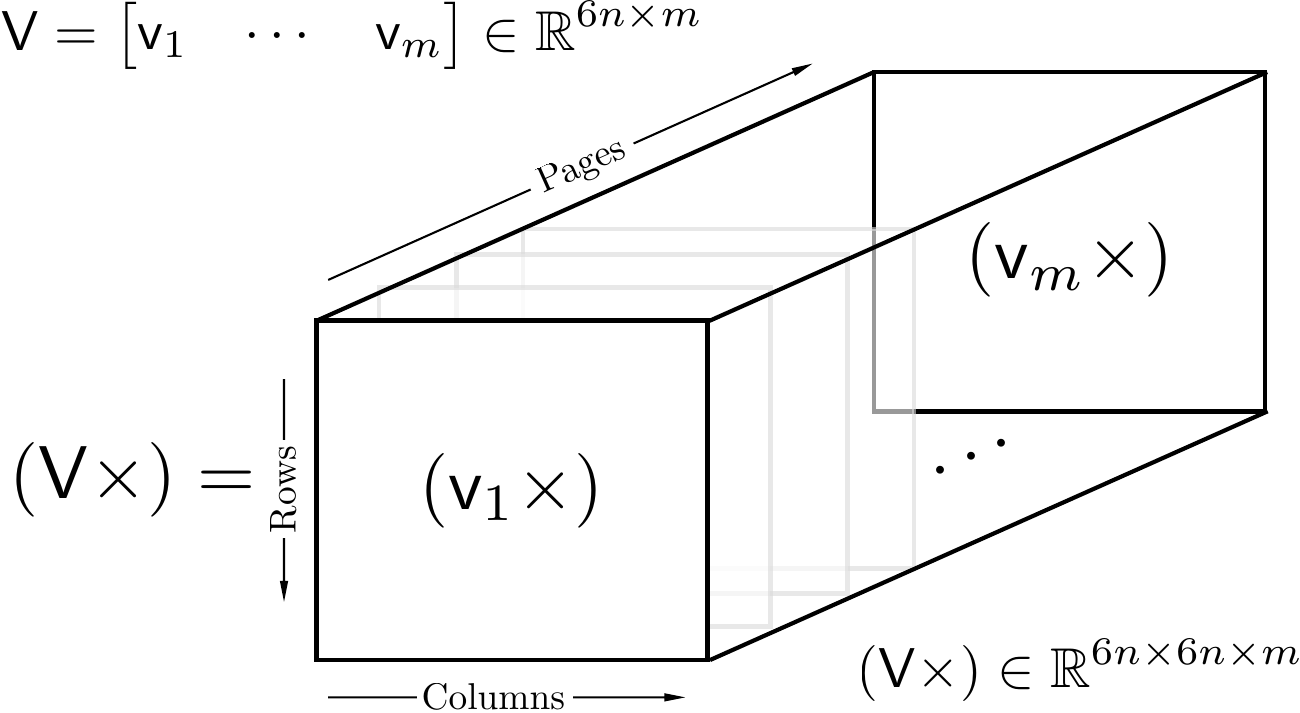}
  \caption{Storage order for the applying the cross product operator to a matrix. A cross product matrix for each column is placed along each page of the dimension 3 tensor.}
  \label{fig:Pages}
\end{figure}

\section{Generalized Christoffel Symbols with Constraint Embedding Joint Models}
\label{app:christoffel}

\begin{algorithm}[t]
\caption{Generalized Christoffel Symbols for Clusters}
\setstretch{1.2}
\begin{algorithmic}[1]
\REQUIRE $q$
\FOR{$i=1$ to $N_c$} 
\STATE $\gI_i^C = {\rm diag}(\vI_{i_1}, \ldots, \vI_{i_{n_i}})$ 
\STATE $\mathcal{D}_i = \left( \frac{\partial \mathring{\gbPhi}_i}{\partial \vjb{i}} \right)^{\!\top_{\!23}}$
\ENDFOR
\FOR{$k=N_c$ to $1$}
\STATE $\mathcal{B} = \gB( \gI^C_k, \gbPhi_k)  $ 
\STATE $\cF_3 = \gI_k^C \, \mathcal{D}_k $
\STATE $\gF_4 = \gI_k^C \, \gbPhi_k$
\STATE $j = k$
\WHILE{$j > 0$}
\STATE $\cF_1 = \cB \,\gbPhi_j$, $\cF_2 = \cB\T \, \gbPhi_j$
\STATE $i = j$
\WHILE{$i>0$}

\STATE $\cA_1 = \gbPhi_i\T \cF_1$, $\cA_2 = \left( \gbPhi_i\T \cF_2\right)^{\!\top_{\!12}} $, $\cA_3 = -(\cA_2)^{\!\top_{\!13}}$

\STATE $\Gammab_{ijk} =\cA_1 $, 
       $\Gammab_{ikj} = \left(\cA_1\right)^{\top_{\!23}}$

\STATE $\Gammab_{jik} = \cA_2$, 
        $\Gammab_{jki} = \left( \cA_2 \right)^{\top_{\!23}}$

\STATE $\Gammab_{kij} = \cA_3$, $\Gammab_{kji} = \left( \cA_3 \right)^{\top_{\!23}}$

\IF{$i=j$}
\STATE $\Gammab_{kii} = \cA_3 + \gF_4\T \mathcal{D}_i$
\ENDIF

\IF{$j = k$}
\STATE $\Gammab_{ijj} = \cA_1 + \gbPhi_i\T \cF_3$ 
\STATE $\cF_3 = \gXM{i}{p(i)}\T \cF_3$
\ENDIF

\STATE $\cF_1 = \gXM{i}{p(i)}\T \cF_1$, $\cF_2 = \gXM{i}{p(i)}\T \cF_2$

\STATE $i = p(i)$
\ENDWHILE
\STATE $\cB = \gXM{j}{p(j)}\T  \, \cB\, \gXM{j}{p(j)}$, $\gF_4 = \gXM{j}{p(j)}\T \gF_4$
\STATE $j = p(j)$
\ENDWHILE
\STATE $\gI_{p(k)}^C = \gI_{p(k)}^C + \gXM{k}{p(k)}\T  \, \gI_k^C\, \gXM{k}{p(k)}$
\ENDFOR
\RETURN $\Gammab$
\end{algorithmic}
\label{alg:Gamma}
\end{algorithm}

The development for obtaining the generalized Christoffel symbols when working with constraint embedding joint models makes use of rank 3 tensors (with the dimensions sequentially identified as rows, columns, and pages as in Fig.~\ref{fig:Pages}). We use the symbol $(\cdot)\Tr{12}$ to denote transposing the first two dimensions (rows and columns), and likewise for $(\cdot)\Tr{13}$ and $(\cdot)\Tr{23}$. Deriving the algorithm for the Christoffel symbols from \eqref{eq:CbStar} follows the same conceptual approach as in \cite{echeandia2021numerical}, but with the extra bookkeeping and conventions required for working with tensors. For example, multiplication between a matrix and a rank 3 tensor is given by:
\[
(A \mathcal{B} )_{ijk} = \sum_\ell A_{i\ell} \mathcal{B}_{\ell jk},  
\]
with multiplication between a rank 3 tensor and a matrix given by:
\[
(\mathcal{A} B)_{ijk} = \sum_\ell \mathcal{A}_{i \ell k} B_{\ell j}
\]
so that, in either case, multiplication occurs page-by-page for the tensor. We adopt the same conventions as in \cite{singh2024second} where for any $\gbPhi_i \in \mathbb{R}^{6 n_i \times d_i}$ we construct $(\gbPhi_i \times) \in \mathbb{R}^{6 n_i \times 6 n_i \times d_i}$ by considering the usual cross product $(~\times)$ for each column of $\gbPhi_i$ and stacking the results along pages (Fig.~\ref{fig:Pages}). 

To arrive at the computational structure in Algo.~\ref{alg:Gamma}, we employ the following identities when $j$ and $k$ are related:
\begin{align}
\frac{\partial \gB_k^C}{\partial \vjb{j} } &= \sum_{\ell \succeq \mx{jk} } \gXM{\ell}{k}\T \, \gB\left( \gI_\ell, \, \gXM{\ell}{j} \gbPhi_j \right) \, \gXM{\ell}{k} \\
&= \gXM{\mx{jk}}{k}\T \, \gB\left(\gI_{\mx{jk}}^C,\,  \gXM{\mx{jk}}{j} \gbPhi_j \right) \, \gXM{\mx{jk}}{k}
\end{align}
and
\begin{align}
\frac{\partial \dot{\gbPhi}_k}{\partial \vjb{j} } = \gbPhi_j \times \gbPhi_k + \frac{\partial \mathring{\gbPhi}_k }{\partial \vjb{j} }~~~ (j\preceq k)
\end{align}
where partial derivatives w.r.t. the different components of $\vjb{j}$ are stacked along pages.

In arranging the computations for recursive implementation, we employ that for matrices $\gV$ and $\gW$ and inertia $\gI$ with matched row-dimensions, we have \cite[Table 3]{singh2024second}:
\begin{align}
\gB( \gI, \gV )\Tr{12} \,\gW &= 
( -\gB(\gI, \gW)\Tr{12}\, \gV )\Tr{23} \\
\gB( \gI, \gV) \gW &= \left( \gB(\gI,\gW)\gV + \gI \, (\gW\times ) \gV \right)\Tr{23} 
\end{align}
We omit the full derivation here as, again, it conceptually follows the one for the simplified case in \cite{echeandia2021numerical}, but with extra tediousness for tracking tensor storage.

Overall, Algo.~\ref{alg:Gamma} has computational complexity $\mathcal{O}(N_c d^2)$ when applied to a fixed library of clusters. The first for loop (Lines 1-4) has complexity $\mathcal{O}(N_c)$. The main for loop (Lines 5-32) executes $N_c$ times, while the nested while loops (Lines 10-30 and 13-27) execute at most $d$ times each. The assignments on Lines 15-17 give the symmetric parts of the generalized Christoffel symbols, while the two if statements (Lines 18 and 21) cover cases where the basis vector fields do not commute, as would be signified by a quantity $\mathcal{D}_i = \frac{\partial \mathring{\gbPhi}_i }{\partial \vjb{i}}$ lacking symmetry in its 2-3 dimensions. {\sc Matlab} source code is available at \cite{wensingweb_Christoffel}.
\fi

\section{Supporting Algorithms for Adaptive Control with Local Closed Loops}
\label{app:adaptive}

This appendix presents a series of algorithms to support adaptive control. These draw heavily from past work (e.g., \cite{niemeyer1991performance} and \cite{wang2012recursive}) while noting generalizations of their treatments to the case of local loop closures modeled via constraint embedding \cite{jain2012multibody,chignoli2023propagation}. We first provide an $O(N_c)$ algorithm for direct adaptive control, while then noting a variety of $O(N_c d)$ algorithms for computing regressors that may be used for indirect or composite adaptive control.

\subsection{Direct Adaptive Control}
The direct adaptive control law of Slotine and Li \cite{slotine1987adaptive}:
\begin{align}
\tau &= Y(q,v,v_r,\dot{v}_r) \hat{\theta} - K_D s \label{eq:direct1}\\
\dot{\hat{\theta}} &= - Y(q,v,v_r,\dot{v}_r)\T s \label{eq:direct2}
\end{align}
can be computed efficiently for open-chain systems via calculations in \cite{niemeyer1991performance}. However, the same structure of these computations holds when considering closed-chain mechanisms that are modeled via constraint embedding. 
To illustrate, we note that
\[
\tau_i = [\hat{H}(q) \dot{v}_r + \hat{C}(q,v) v_r + \hat{g}(q)]_i
\]
can be implemented recursively using  
\begin{align*}
\tau_i &= \gbPhi_i\T \sum_{j\succeq i} \gXM{j}{i}\T \gF_j \\
\gF_i  &= \hat{\gI}_i \ga_i + \gB(\hat{\gI}_i, \gv_i) \gw_i \\
\ga_i  &= \gXM{i}{p(i)} \ga_{p(i)} + \gPhiDot_i(q, v) \vj{r,i} + \gbPhi_i \dot{\vj{}}_{r,i}  \\
\gv_i  &= \gXM{i}{p(i)} \gv_{p(i)} + \gbPhi_i \vj{r,i}\\
\vw_i  &= \gXM{i}{p(i)} \gw_{p(i)} + \gbPhi_i \vj{i}
\end{align*}
where $\gPhiDot_i$ is its derivative under the true velocity $v$, while $\gw_i$ is the velocity that cluster $i$ takes with reference velocities $v_r$. 

To see the recursive implementation, we consider a body-level regressor helper function:
\[
\mathcal{Y}_i(\ga_i, \gv_i, \gw_i)  = \frac{\partial}{\partial \theta_i} \left[ \gI_i \ga_i + \gB(\gI_i, \gv_i) \gw_i \right]
\]
so that $ \gF_i = \mathcal{Y}_i(\ga_i, \gv_i, \gw_i) \hat{\theta}_i$. The important result here is that the cluster-level forces remain linear in the inertia parameters of each body, extending the role that the body parameters play in the conventional case. As a result, the implementation for the certainty equivalence portion ($Y \hat{\theta}$) in \eqref{eq:direct1} follows the structure of a recursion, as shown in lines \ref{line:rnea_start}-\ref{line:rnea_end} in Alg.~\ref{alg:direct}.

Regarding the recursive implementation of the update law \eqref{eq:direct2}, we see that
\[
(Y \hat{\theta})_i = \sum_{j \succeq i} \gbPhi_i \T \gXM{j}{i}\T \mathcal{Y}_j(\ga_j, \gv_j, \gw_j) \hat{\theta}_j
\]
so that the block $Y_{ij}$ relating the torque at joint $i$ to the parameters of cluster $j$ takes the form
\[
Y_{ij} = \begin{cases} \gbPhi_i \T \gXM{j}{i}\T \mathcal{Y}_j(\ga_j, \gv_j, \gw_j)  & j \succeq i \\ 0 & o/w \end{cases}
\]
With this form, we can write the parameter update term
\[
(Y^T s)_j = \sum_i Y_{ij}\T {\tt s}_i = \sum_{i\preceq j} \mathcal{Y}_j\T \gXM{j}{i} \gbPhi_i {\tt s}_i  
\]
where ${\tt s}_i$ denotes the components of $s$ associated with cluster joint $i$. However, owing to the fact that $s = v - v_r$, the sum can be reorganized using:
\[
\sum_{i \preceq j } \gXM{j}{i} \gbPhi_i {\tt s}_{i} = \gv_j - \gw_j 
\]
Ultimately providing:
\[
(Y^T s)_j = \mathcal{Y}_j\T (\gv_j - \gw_j)
\]
which is implemented on line \ref{line:YTs} of Alg.~\ref{alg:direct}.

\begin{algorithm}[t]
\caption{ Algorithm to Support Direct Adaptive Control}
\setstretch{1.2}
\begin{algorithmic}[1]
\REQUIRE $q, v = [\vj{1};\ldots; \vj{N}], v_r, \dot{v}_r, \hat{\theta}$
\STATE $\ga_0 = - {}^0 \va_g$
\FOR{$i=1$ to $N$} \label{line:rnea_start}
\STATE $\gv_i = \gXM{i}{p(i)}\,\gv_{p(i)}+\gbPhi_i \,\vj{i}$
\STATE $\gw_i = \gXM{i}{p(i)}\,\gw_{p(i)}+\gbPhi_i \,\vj{r,i}$
\STATE $\gPhiDot_i = \mathring{\gbPhi}_i(\qj{i}, \vj{i}) + (\gv_i \times) \gbPhi_i$
\STATE $\ga_i = \gXM{i}{p(i)}\,\ga_{p(i)}+\gPhiDot_i \,\vj{r,i} +   \gbPhi_i \, \dot{\vj{}}_{r,i} $
\STATE $\hat{\gI}_i = {\rm diag}(\hat{\vI}_{i_1}, \ldots, \hat{\vI}_{i_{n_i}})$ 
\STATE $\gF_i = \hat{\gI}_i \ga_i + \gB( \hat{\gI}_i, \gv_i) \gw_i$
\STATE $[Y\T s]_i = \mathcal{Y}_i(\ga_i, \gv_i, \gw_i)\T (\gv_i - \gw_i)$ \label{line:YTs}
\ENDFOR
\FOR{$i=N$ to $1$}
\STATE $\tau_i = \gbPhi_i\T \gF_i$
\STATE $\gF_{p(i)}+= \gXM{i}{p(i)}\T \gF_i $
\ENDFOR \label{line:rnea_end}
\STATE $\tau = \tau - K_D s$ \label{line:addDamping}
\RETURN $\tau$, $Y^T s$
\end{algorithmic}
\label{alg:direct}
\end{algorithm}

\subsection{Indirect and Composite Adaptation}
In indirect \cite{li1989indirect} or composite adaptive control settings \cite{slotine1989composite} or in momentum-based estimation methods (e.g., \cite{xiu2016experimental,nedelchev2023enhanced}), it is often desirable to compute other regressors. Both sets of approaches are motivated by integrating or filtering both signals in an equation where one side is known (or measured) and the other is linear in the inertial parameters. 

\subsubsection{Indirect Adapatation Based on Momentum}
As a first instance, consider Hamilton's equations:
\begin{align*}
v &= H(q)^{-1} p \\
\dot{p} &= \tau + C(q,v)\T v - g(q) 
\end{align*}
where the generalized momentum $p = H(q) v$.
Applying a low-pass filter to both sides of the second equation and re-arranging provides
\[
\frac{\lambda}{s+\lambda} \tau = \frac{\lambda}{s+\lambda} [\dot{p} - C\T v + g]
\]
where we now consider $s$ as the Laplace variable. We can use the impulse response $w(t) = \lambda e^{-\lambda t}$ of the filter to simplify the output of the filter $ \frac{\lambda}{s+\lambda} \dot{p}$ at time $t$ via convolution
\begin{align*}
&\int_0^t w(t-t') \dot{p}(t') {\rm d}t' \\ &~~~~~~~= w(0) p(t) - w(t) p(0) -  \lambda \int_0^t w(t-t') p(t') {\rm d}t'  
\end{align*}
so that at time $t$ we have
\begin{align*}
\frac{\lambda}{s+\lambda} \tau &= \lambda p(t) - w(t) p(0) - \frac{\lambda}{s+\lambda} \left[ \lambda p + C\T v - g \right]
\end{align*}

Overall, to adapt to any residual in this equation (e.g., via \cite{li1989indirect,slotine1989composite}), we would like corresponding regressors for $p = H v$, 
$C\T v$ and $g$. We consider regressors so that 
\begin{align*}
C^T v &= Y_{c}(q,v) \, \theta \\ 
p & = Y_p(q,v) \, \theta \\
g &= Y_g(q) \,\theta 
\end{align*}

\begin{algorithm}[t!]
\caption{Regressors}
\setstretch{1.2}
\begin{algorithmic}[1]
\REQUIRE $q, v, v_r, \dot{v}_r$
\STATE $\ga_0 = \ga_{g,0} = - {}^0 \va_g$
\STATE $\gv_0 = \gw = 0$ 
\FOR{$i=1$ to $N_c$}
\STATE $\gv_i = \gXM{i}{p(i)}\,\gv_{p(i)}+\gbPhi_i \,\vj{i}$
\STATE $\gw_i = \gXM{i}{p(i)}\,\gw_{p(i)}+\gbPhi_i \,\vj{r,i}$
\STATE $\gPhiDot_i = \mathring{\gbPhi}_i(\qj{i}, \vj{i}) + (\gv_i \times) \gbPhi_i$
\STATE $\ga_i = \gXM{i}{p(i)}\,\ga_{p(i)}+\gPhiDot_i \,\vj{r,i} +   \gbPhi_i \, \dot{\vj{}}_{r,i} $
\STATE $\ga_{g,i} = \gXM{i}{p(i)}\,\ga_{g,p(i)}$
\STATE $\mathsf{F}_i = \mathcal{Y}_i(\ga_i, \gv_i, \gw_i)$ 
\STATE $\mathsf{H}_i = \mathcal{Y}_i(\gv_i, 0,0)$ 
\STATE $\mathsf{G}_{i} = \mathcal{Y}_i(\ga_{g,i}, 0,0)$ 
\STATE $[Y_T]_i= \frac{1}{2} \gv_i\T \mathsf{H}_i$ 
\STATE $[Y_{\dot{V}}]_i=  \gv_i\T \mathsf{G}_i$
\ENDFOR
\FOR{$j=N_c$ to $1$}
  \STATE $i = j$
  \WHILE{$i>0$}
    \STATE $[Y]_{ij} = \gbPhi_i\T \mathsf{F}_j$
    \STATE $[Y_p]_{ij} = \gbPhi_i\T \mathsf{H}_j$
    \STATE $[Y_c]_{ij} = \gPhiDot_i\T \mathsf{H}_j$ 
    \STATE $[Y_g]_{ij} = \gbPhi_i\T \mathsf{G}_j$
    \STATE $\mathsf{F}_j= \gXM{i}{p(i)}\T \mathsf{F}_j$ ~~~ $\mathsf{H}_j= \gXM{i}{p(i)}\T \mathsf{H}_j$ ~~~ $\mathsf{G}_j= \gXM{i}{p(i)}\T \mathsf{G}_j$
    \STATE $i = p(i)$
  \ENDWHILE
\ENDFOR
\RETURN $Y$, $Y_p$, $Y_g$, $Y_c$ , $Y_T$, $Y_{\dot{V}}$
\end{algorithmic}
\label{alg:Regressor}
\end{algorithm}

We can form a regressor for $p = H \dot{q}$ via 
\[
p = Y(q, 0,0,v ) \theta =: Y_p(q,v) \theta
\]
evaluated in the absence of gravity. Likewise the gravitational force regressor $Y_g(q) = Y(q,0,0,0)$. This leaves the evaluation of a regressor for $C\T v$ as the remaining item.

It is notable that $C^T v$ takes on the same value, no matter which admissible factorization of $C(q,v)$ is used. It can be shown that our choice of $\gB$ from before satisfies $\gB\T \gv = 0$, which, using our equation \eqref{eq:CbStar} gives, 
\[
C\T v = \gPhiDot\T \, \gX\T \, \gh
\]
where 
$\gh = \gI \gv$ 
collects the spatial momentum of each of the bodies. Thus, we can compute 
$C\T v$ 
via \cite{wang2012recursive}
\begin{equation}
(C\T v)_i = \gPhiDot_i\T \sum_{j \succeq i} \gXM{j}{i}\T \gI_j \gv_j
\label{eq:CTv}
\end{equation}
which is linear in the inertial parameters since $ \gI_j \gv_j =\mathcal{Y}_j(\gv_j, 0,0) \theta_j$. As a result, we have:
\[
[ Y_c ]_{ij} = \begin{cases} \gPhiDot_i\T \gXM{j}{i}\T \mathcal{Y}_j(\gv_j, 0,0)  & j \succeq i \\ 0 & o/w \end{cases}
\]

\subsubsection{Indirect Adaptation Based on Energy}
As a second instance, we could instead note that:
\[
\ddt \left[ T + V \right] = v\T \tau
\]
Now $\ddt V = v\T g$ while $\ddt{T}$ will necessarily have acceleration terms $\dot{v}$  that are difficult to measure. Using the same low-pass filter trick as before, we have:
\[
\frac{\lambda}{s+\lambda} v\T \tau = \lambda T(t)-w(t)T(0)  - \frac{\lambda}{s+\lambda} \left[ \lambda T - v\T g \right]
\]

Overall, to adapt to residuals in this equation, we would like corresponding regressors for $T$ and 
$\dot{V}$. Analogous to before, we consider regressors so that 
\begin{align*}
T &= Y_T(q,v) \,\theta \\ 
\dot{V} &= Y_{\dot{V}}(q,v)\, \theta = v\T g 
\end{align*}

To compute these regressors recursively, we re-exploit our cluster-level regressors $\mathcal{Y}_i$ which can be used to show: 
\begin{align*}
T & = \sum_i \frac{1}{2} \gv_i \T \gI_i \gv_i = \sum_i \frac{1}{2} \gv_i \T \mathcal{Y}_i(\gv_i, 0,0) \theta_i \\
\dot{V} & = \sum_i -\gv_i\T \gI_i \gXM{i}{0} {}^0 \va_g  =\sum_i \gv_i\T \mathcal{Y}_i( -\gXM{i}{0} {}^0 \va_g, 0 , 0 ) \theta_i 
\end{align*}
We can use these to then get sub-blocks of the regressors for the scalar quantities via:
\begin{align*}
[Y_T(q,v)]_i &= \frac{1}{2} \gv_i \T \mathcal{Y}_i(\gv_i, 0,0) \\
[Y_{\dot{V}}(q,v)]_i &= \gv_i\T \mathcal{Y}_i( -\gXM{i}{0} {}^0 \va_g, 0 , 0 )
\end{align*}


\subsubsection{Recursive Computation of Indirect Regressors} Putting these together, we can compute all of our regressors in a unified algorithm, given in Alg.~\ref{alg:Regressor}, with source code at \cite{wensingweb_adaptive}.

\vfill
\fi

\iftacversion
\begin{IEEEbiography}[{\includegraphics[width=1in,height=1.25in,clip,keepaspectratio]{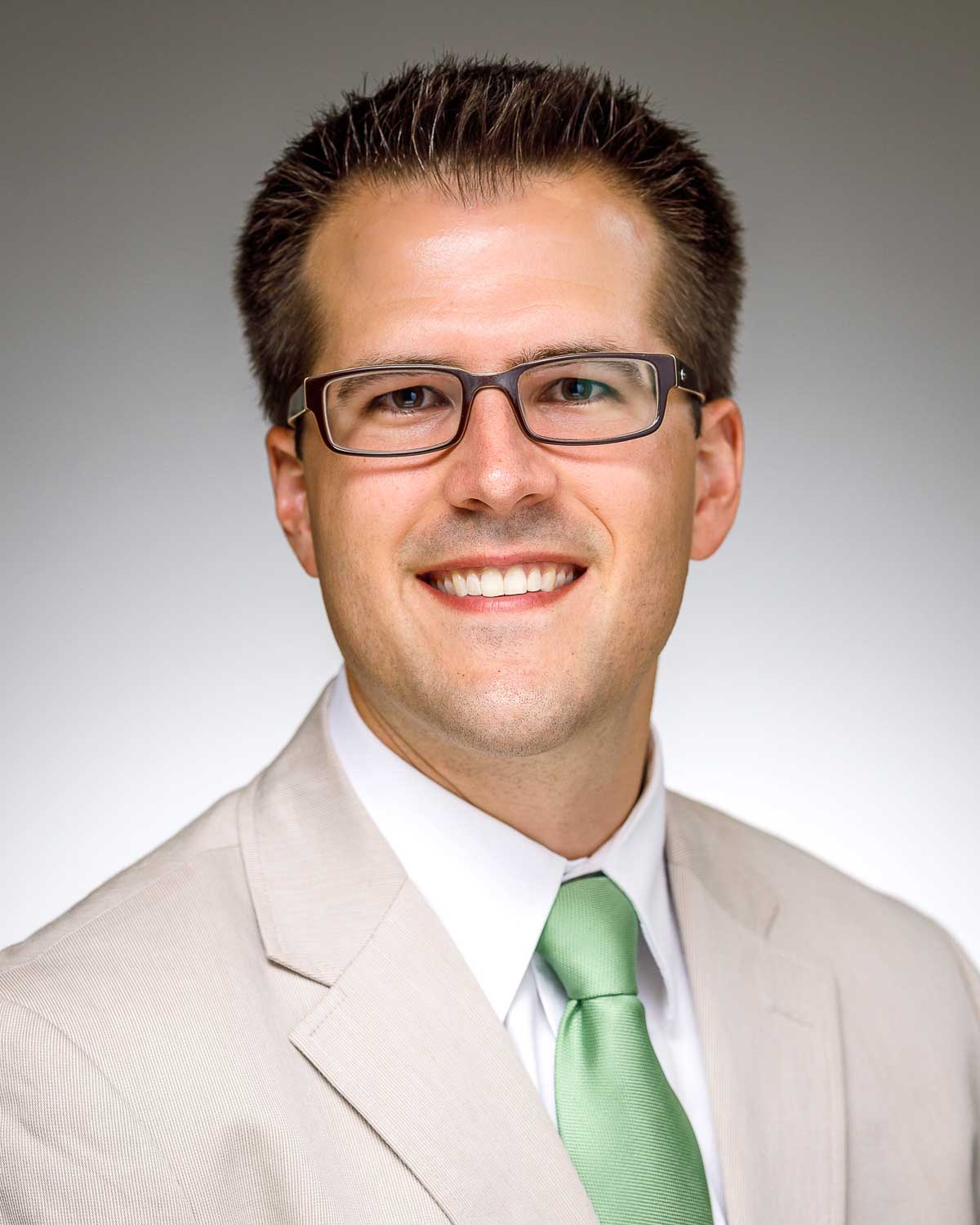}}]{Patrick M. Wensing}
(Senior Member, IEEE) received the B.S., M.S., and Ph.D. degrees in electrical and computer engineering from The Ohio State University, Columbus, OH, USA, in 2009, 2013, and 2014, respectively. 

He is currently the Wanzek Family Foundation Collegiate Associate Professor of Engineering with the Department of Aerospace and Mechanical Engineering, University of Notre Dame, where he directs the Robotics, Optimization, and Assistive Mobility (ROAM) Laboratory. Before joining Notre Dame, he was a Postdoctoral Associate with MIT, working on control system design for the MIT Cheetah robots. His current research interests include aspects of dynamics, optimization, and control toward advancing the mobility of legged robots and assistive devices. 

Dr. Wensing received the NSF CAREER Award in 2020, the Toshio Fukuda Young Professional Award in 2023, and has received paper award recognitions from IEEE RA-L, ICRA, Humanoids, and IJHR. He serves as the Co-Chair for the IEEE RAS Technical Committee on Model-Based Optimization for Robotics. He also serves as a Senior Editor for the {\sc IEEE Transactions on Robotics}.
\end{IEEEbiography}

\begin{IEEEbiography}
[{\includegraphics[width=1in,height=1.25in,clip,keepaspectratio]{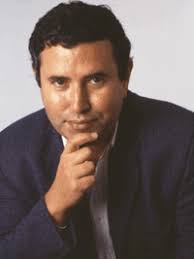}}]{Jean-Jacques Slotine}
 received the Ph.D. degree in aeronautics and astronautics from the
Massachusetts Institute of Technology (MIT),
Cambridge, MA, USA, in 1983.

He is currently a Professor of Mechanical Engineering and Information Sciences, a Professor
of Brain and Cognitive Sciences, and the Director of the Nonlinear Systems Laboratory, MIT.
After working with the Department of Computer
Research, Bell Laboratories, he joined MIT faculty in 1984. He is a co-author of two graduate
textbooks entitled Robot Analysis and Control (Hoboken, NJ, USA: Wiley, 1986) and Applied Nonlinear Control (Englewood Cliffs, NJ, USA:
Prentice-Hall, 1991). His research focuses on developing rigorous but
practical tools for nonlinear systems analysis and control. His research
interests include key advances and experimental demonstrations in the
contexts of sliding control, adaptive nonlinear control, adaptive robotics,
machine learning, and contraction analysis of nonlinear dynamical systems.

Dr. Slotine was a Member of the French National Science Council
from 1997 to 2002 and the Singapore Immunology Network Advisory
Board of the Agency for Science, Technology and Research from 2007
to 2010. He is a Member of the Scientific Advisory Board of the Italian
Institute of Technology and a Distinguished Visiting Faculty with Google
Brain. He is the recipient of the 2016 Oldenburger Award.
\end{IEEEbiography}
\fi

\end{document}